\newtheorem{rem}{Remark}
\newtheorem{prop}{Proposition}
\def\RR{\textsf{R}\/}
\let\pkg=\strong
\title{Applicability and Interpretability of Hierarchical Agglomerative Clustering With or Without Contiguity Constraints}
\author{Nathanaël Randriamihamison$^{1,2,3}$, Nathalie Vialaneix$^1$ \& Pierre 
Neuvial$^2$}
\date{$^1$ MIAT, Université de Toulouse, INRA, Castanet Tolosan, France\\
\{nathanael.randriamihamison, nathalie.vialaneix\}@inra.fr\\
$^2$ Institut de Mathématiques de Toulouse; 
UMR 5219, Université de Toulouse, CNRS
UPS, F-31062 Toulouse Cedex 9, France\\
pierre.neuvial@math.univ-toulouse.fr\\
$^3$ INRIA Bordeaux Sud-Ouest, CQFD Team, France}
\begin{document}


\maketitle

\begin{abstract}
  Hierarchical Agglomerative Classification (HAC) with Ward's linkage has been
  widely used since its introduction by \cite{ward_JASA1963}. This article
  reviews extensions of HAC to various input data and contiguity-constrained
  HAC, and provides applicability conditions. In addition, various versions of
  the graphical representation of the results as a dendrogram are also presented
  and their properties are clarified. We clarify and complete the results
  already available in an heterogeneous literature using a uniform
  background. In particular, this study reveals an important distinction between
  a consistency property of the dendrogram and the absence of crossover within
  it. Finally, a simulation study shows that the constrained version of HAC can
  sometimes provide more relevant results than its unconstrained version despite
  the fact that the latter optimizes the objective criterion on a reduced set of
  solutions at each step.  Overall, the article provides comprehensive
  recommendations for the use of HAC and constrained HAC depending on the input
  data as well as for the representation of the results.
\end{abstract}

\textbf{Keywords: } Hierarchical Agglomerative Clustering, Ward's Linkage, Contiguity Constraint, Dendrogram, Monotonicity

\section{Introduction}

Hierarchical Agglomerative Classification (HAC) with Ward's linkage has been
widely used since its introduction by \cite{ward_JASA1963}. The method is
appealing since it provides a simple approach to approximate, for any given
number of clusters, the partition minimizing the within-cluster inertia or
``error sum of squares''. In addition to its simplicity and the fact that it is
based on a natural quality criterion, HAC often comes with a popular graphical
representation called a dendrogram, that is used as a support for model
selection (choice of the number of clusters) and result interpretation.
Originally described to cluster data in $\mathbb{R}^p$, the method has been
applied more generally to data described by arbitrary distances (or
dis-similarities). Constrained versions of HAC have also been proposed to
incorporate a ``contiguity'' relation between objects into the clustering
process \citep{lebart_CAD1978,Grimm1987}.

However, as already shown by \cite{Murtagh2014}, confusions still exist between
the different versions and how the results are represented with a dendrogram, 
which is also illustrated in \citep{Grimm1987} that presents different 
alternatives for the representation. These have resulted in different 
implementations of the Ward's clustering algorithm, with notable differences in 
the results. More importantly, the applicability framework of the different 
versions is not always clear: \cite{Batagelj1981} has given very general 
necessary and sufficient conditions on a linkage value to ensure that it is 
always increasing for any given dissimilarity. This property is important to 
ensure the consistency between the results of HAC and their graphical display as
a dendrogram. Conditions on a general constraint are also provided in
\cite{Ferligoj1982} to ensure a similar property and \cite{Grimm1987} proposes
alternative solutions to the standard heights to address the fact that the
linkage might sometimes fail to provide a consistent representation of the
results of HAC. However, none of these articles fully cover the theoretical
properties of these alternatives, for unconstrained and constrained versions of
the method.

The scope of the present article is to clarify the conditions of applicability
and interpretability of the different versions of HAC and contiguity-constrained
HAC (CCHAC). We discuss the relevance of using them with different types of
input data and how the corresponding results can be interpreted. We perform a
systematic study of the monotonicity of the different versions of the dendrogram
heights by reporting the results already available in the literature for
standard HAC and its various extensions and by completing the ones that were not
available to our knowledge. In addition to providing a uniform presentation of
a number of results partially present in the literature, this study reveals an
important distinction between this consistency property and the absence of
crossover within the dendrogram that was not discussed before.

Finally, we illustrate the respective behavior of HAC and CCHAC in a simulation 
study where the different heights are used in order to represent the results. 
This simulation shows that, in addition to reducing the computational time 
needed to perform the method, the constrained version (CCHAC) can also provide 
better solution than the standard one (HAC) when the constraint is consistent 
with the data, despite the fact that it optimizes the objective criterion on a 
reduced set of solutions at each step.

\section{HAC and contiguity-constrained HAC}
\label{sec:hac-cchac}

\subsection{Hierarchical Agglomerative Clustering} 
\label{sec:hac-euclidean-framework}

HAC was initially described by \citet{ward_JASA1963} for data in
$\mathbb{R}^p$. Let $\Omega := \{x_1,\cdots,x_n\}$ be the set of objects to be
clustered, which are assumed to lie in $\mathbb{R}^p$. A cluster is a subset of
$\Omega$. The loss of information when grouping objects into a cluster
$G \subset \Omega$ is quantified by the inertia (also known as \emph{Error Sum
  of Squares}, ESS):
\begin{equation}
  \label{eq:inertia}
  I(G) = \sum\limits_{x_i \in G} \|x_i - \bar{x}_G\|_{\mathbb{R}^p}^2\,,
\end{equation}
where $\bar{x}_G = n^{-1}\sum_{i=1}^n x_i$ is the center of gravity of $G$. 
Starting from a partition $\mathcal{P}=\{ G_1, \cdots, G_{l}\}$ of $\Omega$, the loss of information when merging two clusters $G_u$ and $G_v$ of $\mathcal{P}$ is quantified by :
\begin{equation}
\label{eq:ward-s-linkage}
\delta(G_u,G_v):=I(G_u \cup G_v)-I(G_u)-I(G_v).
\end{equation}
The quantity $\delta$ is known as Ward's linkage and it is equal to the 
variation of within-cluster inertia (also called \emph{within-cluster sum of 
squares}) after merging two clusters. It also corresponds to the squared distance between centers of gravity:
\begin{equation}
\label{eq:ward-s-linkage-gravity}
\delta(G_u,G_v)= \frac{|G_u||G_v|}{|G_u|+|G_v|}\|\bar{x}_{G_u} - \bar{x}_{G_v}\|^2_{\mathbb{R}^p}.
\end{equation}

The HAC algorithm is described in Algorithm~\ref{algo:HAC}.
Starting from the trivial partition 
$\mathcal{P}_1=\{\{x_1\},\{x_2\},\cdots,\{x_n\}\}$ with $n$ singletons, the HAC 
algorithm creates a sequence of partitions by successively merging the two 
clusters whose linkage $\delta$ is the smallest, until all objects have been 
merged into a single cluster.
\begin{algorithm}
\caption{Standard Hierarchical Agglomerative Clustering (HAC)}
	\begin{algorithmic}[1]
	\State \textbf{Initialization:} 
$\mathcal{P}_1=\{\{x_1\},\{x_2\},\cdots,\{x_n\}\}$
	\For{$t=1$ to $n-1$}
		\State Compute all pairwise linkage values between clusters of the current 
partition $\mathcal{P}_{t}$
		\State Merge the two clusters with minimal linkage value to obtain the 
next partition $\mathcal{P}_{t+1}$
	\EndFor
	\State \Return $\{\mathcal{P}_{1}, \mathcal{P}_{2}, \ldots, \mathcal{P}_{n}\}$
	\end{algorithmic}
\label{algo:HAC}
\end{algorithm}
Linkage values at step $t$ can be efficiently updated using linkage values at 
step $t-1$ with a formula known as the Lance-Williams formula \citep{Lance1967}. 
In the case of Ward's linkage, this formula has first been demonstrated by 
\cite{Wishart1969}:
\begin{eqnarray}
\label{eq:lance-williams_ward}
  \delta(G_{u} \cup G_{v}, G_w) &=& 
\frac{|G_{u} |+|G_w|}{|G_{u}|+|G_{v} |+|G_w|} \delta(G_{u} ,G_w) + 
\frac{|G_{v} |+|G_w|}{|G_{u}|+|G_{v} |+|G_w|}  \delta(G_{v} ,G_w) \notag\\
    && \qquad -\frac{|G_w|}{|G_{u} |+|G_{v} |+|G_w|} \delta(G_{u} , G_{v})
\end{eqnarray}
where $|G|$ denotes the cardinal of any cluster $G$. 

The framework of the current section can be extended straightforwardly to the 
case where the objects to cluster are weighted. However, this study focuses on 
uniform weights for the sake of simplicity.

\subsection{HAC under contiguity constraint}
\label{sec:CCHAC}

In various applications, there exists an \emph{a priori} information about the
relations between the objects. For instance, it is the case for spatial
statistics, where objects possess natural proximity relations, in genomics,
where genomic loci are linearly ordered along the chromosome, or in
neuroimaging, with the three-dimensional structure of the brain. According to
this point of view, Contiguity-Constrained HAC (CCHAC) allows only mergers
between contiguous objects. Considering this approach can have two benefits: (i)
more interpretable results by taking into account the natural structure of the
data; (ii) a decreased computational time, because only a subset of all possible
mergers are considered.

A very general framework for constrained HAC is described in 
\cite{Ferligoj1982}: the contiguity is defined by an arbitrary symmetric 
relation $\mathcal{R} \subset \Omega \times \Omega$ that indicates which pairs 
of objects are said \emph{contiguous}. Only these pairs are then allowed to be 
merged at the first step of the algorithm, using the same objective function 
than in the standard HAC algorithm. The next step iterates 
similarly, by using the following rule to extend the contiguity relation to 
merged clusters:
\[
  \left(G_u \cup G_v , G_w \right) \in \mathcal{R} \qquad \Leftrightarrow 
\qquad \left(G_{u} , G_w\right) \in \mathcal{R} \textrm{ or } \left(G_v ,
G_w\right) \in \mathcal{R}.
\]
Algorithm~\ref{algo:CCHAC} describes contiguity-constrained hierarchical
agglomerative clustering (CCHAC).
\begin{algorithm}
\caption{Contiguity-Constrained Hierarchical Agglomerative Clustering (CCHAC)}
	\begin{algorithmic}[1]
	\State \textbf{Initialization:} 
$\mathcal{P}_1=\{G_1^1,G_2^1,\cdots,G_n^1\}$ where $G_u^1 = \{x_u\}$. 
Contiguous singletons are defined by $\mathcal{R}_1 = \mathcal{R} \subset 
\Omega \times \Omega$.
	\For{$t=1$ to $n-1$}
		\State Compute all pairwise linkage values between \emph{contiguous} 
clusters of the current partition $\mathcal{P}_t$ with respect to 
$\mathcal{R}_t$
		\State Merge the two \emph{contiguous} clusters, $G^t_{v_1}$ and 
$G^t_{v_2}$ with minimal linkage value to obtain the next partition 
$\mathcal{P}_{t+1} = \{G_u^{t+1}\}_{u=1,\ldots,n-t-1}$
    \State Extend the contiguity relation to the new cluster $G^t_{v_1} \cup 
G^{t}_{v_2} \in \mathcal{P}_{t+1}$ by setting 
  \[
    \left(G^t_{v_1} \cup G^t_{v_2} , G^{t}_w \right) \in \mathcal{R}_{t+1} 
\qquad \Leftrightarrow \qquad \left(G^t_{v_1} , G^{t}_w\right) \in 
\mathcal{R}_t 
\textrm{ or } \left(G^t_{v_2} , G^{t}_w\right) \in \mathcal{R}_t.
  \]
	\EndFor
	\State \Return $\{\mathcal{P}_{1}, \mathcal{P}_{2}, \ldots, \mathcal{P}_{n}\}$
	\end{algorithmic}
\label{algo:CCHAC}
\end{algorithm}
The only difference with standard HAC lies in the fact that only contiguous clusters are
merged. From a computational viewpoint, only the linkage values for a
subset of $\mathcal{P}_t \times \mathcal{P}_t$ have to be considered,
which can drastically reduce the number of values to be computed with
respect to the standard algorithm. This gain in computational time
comes at the price of a (potential) loss in the objective function
at a given step of the algorithm, especially if the constraint is not
consistent with the dissimilarity or similarity values (see
Section~\ref{sec:simulations} for illustration and discussion). This
also has a side effect on standard representations of the result of
the algorithm, which is discussed in Section~\ref{sec:monoticity}. 

\paragraph{Order-constrained HAC.} A simple and useful case of contiguity
constraint is the case when the symmetric relation is a contiguity relation
defined along a line.  This special and simple case is frequently encountered in
genomics (where the contiguity thus corresponds to the genomic position on a
given chromosome) and will be called \emph{order-constrained HAC} (OCHAC) in the
sequel. In this specific case, every cluster has exactly two neighbours (except
for the two positioned at the beginning and the end of the line) and at step $t$
of the algorithm, only $n-t$ values of the linkage have to be computed (instead
of $(n-t)(n-t-1)/2$ for standard HAC). This case is the one implemented in the
\RR{} package \pkg{adjclust} and an efficient algorithm is described in
\cite{ambroise_etal_p2018} for sparse datasets. In this paper, we demonstrate
the good properties of the CCHAC for the case of a general contiguity relation
and illustrate the opposite situation (where some good properties are not always
satisfied for CCHAC) by providing counter-examples and illustrations in the
specific case of OCHAC.

\section{Validity of HAC in possibly non-Euclidean settings}
\label{sec:HAC_extensions}

In this section, we systematically justify the use of HAC algorithm (with or without contiguity constraints) for all 
kinds of proximity data, including dissimilarity and similarity data. 

\subsection{Extension to dissimilarity data}
\label{sec:extens-diss-data}

The HAC algorithm of \citet{ward_JASA1963} has been designed to cluster 
elements of $\mathbb{R}^p$. In practice however, the objects to be 
clustered are often only indirectly described by a matrix of pairwise 
dissimilarities, $D=(d_{ij})_{1\leq i,j \leq n}$. Formally, a dissimilarity is 
a generalization of a distance, that can not necessarily be embedded into an 
Euclidean space (\textit{e.g.}, because the triangle inequality does not 
hold for instance). Here, we only assume that $D$ satisfies the following properties 
for all $i,j \in \{1, \dots, n\}$:
\begin{equation*}
  d_{ij} \geq 0; \quad d_{ii}=0; \quad d_{ij}=d_{ji}\,.
\end{equation*}

The HAC algorithm will be applicable to such a dissimilarity matrix $D$ if $D$
is Euclidean. Formally, $D$ is Euclidean if there exists an Euclidean space
$(E, \langle \cdot , \cdot \rangle )$ and $n$ points
$\{ x_1, ... , x_n \} \subset E$ such that $d_{ij} = \| x_i-x_j \|$ for all
$i,j \in \{1, ... , n \}$, with $\| \cdot \|$ the norm induced by the inner
product, $\langle . , . \rangle$, on $E$. Under this assumption, the
dissimilarity case is a simple extension of the original $\mathbb{R}^p$
framework described in Section~\ref{sec:hac-cchac}.  Different versions of a
necessary and sufficient condition for which an observed dissimilarity matrix is
Euclidean have been obtained in \cite{schoenberg_AM1935,Young1938,Krislock2012}.

When such conditions do not
hold, $D$ is simply called a dissimilarity dataset, which is a particular case
of proximity or relational datasets. \cite{schleif_tino_NC2015} have proposed a
typology of such datasets and described different approaches that can be
used to extend statistical or learning methods defined for Euclidean data to
such proximity data. In brief, the first main strategy consists in finding a way
to turn a non Euclidean dissimilarity into an Euclidean distance, that is the
closest (in some sense) to the original dissimilarity. This can be performed
using eigenvalue corrections \citep{chen_etal_JMLR2009}, embedding strategies
(like multidimensional scaling, \cite{Kruskal1964}) or solving a maximum
alignment problem \citep{chen_ye_ICML2008}, for instance.

\paragraph{A general construction.} Alternatively, by using an analogy between distance and dissimilarity, HAC can
be directly extended to non-Euclidean data as in
\cite{chavent_etal_CS2018}. This extension stems from the fact that, in the
Euclidean case of Section~\ref{sec:hac-cchac}, the inertia of a clusters may be expressed only in function of sums of the entries of the pairwise distances $(\| x_i - x_j \|_{\mathbb{R}^p}, 1\leq i,j \leq n)$:
  \begin{align}
    \label{eq:inertia-dist}
    I(G) = \frac{\Delta(G,G)}{2|G|} \,,
  \end{align}
where $\Delta$ is defined by $\Delta(G_u, G_v) = \sum_{x_i \in G_u,x_{j} \in 
G_v} \| x_i - x_j \|_{\mathbb{R}^p}^2$ for any clusters $G_u$ and $G_v$.
As a consequence of~\eqref{eq:inertia-dist}, Ward's linkage between any two clusters $G_u$ and $G_v$ may be itself be written in function of these pairwise distances, see, e.g., \citet[p. 279]{Murtagh2014}:
\begin{align}
  \label{eq:Ward-dist}
  \delta(G_u, G_v) &=  \frac{|G_u||G_v|}{|G_u|+|G_v|}\left(\frac{\Delta(G_u, G_v) }{|G_u||G_v|}- \frac{\Delta(G_u, G_u) }{2|G_u|^2} - \frac{\Delta(G_v, G_v) }{2|G_v|^2}\right)\,.
\end{align}
Therefore, as proposed by~\cite{chavent_etal_CS2018}, an elegant way to extend Ward's HAC to dissimilarity data is to \emph{define} the inertia of a cluster using~\eqref{eq:inertia-dist}, with (sums of) distances replaced by (sums of) dissimilarities, that is:
\begin{align}
  \label{eq:inertia-diss}
  \tilde{I}(G) = \frac{\tilde{\Delta}(G,G)}{2|G|} \,,
\end{align}
where 
\begin{align}
\label{eq:Delta-dissim}
\tilde{\Delta}(G_u, G_v) = \sum_{x_i \in G_u,x_j \in G_v}  d_{ij}^2\,.
\end{align}
The corresponding HAC is then formally obtained as the output of
Algorithm~\ref{algo:HAC}, as described in
Section~\ref{sec:hac-euclidean-framework}. In particular, Ward's linkage is
still given by \eqref{eq:Ward-dist}, with $\Delta$ formally replaced by
$\tilde{\Delta}$, and, as a consequence, the Lance-Williams update formula is also still given by
\eqref{eq:lance-williams_ward}. 
When the elements of $\Omega$ do belong to an Euclidean space and the
dissimilarities are the pairwise Euclidean distances
$\| x_i - x_j \|_{\mathbb{R}^p}$, these two definitions of HAC
coincide. Otherwise, HAC is still formally defined, and the linkage can still be
seen as a measure of heterogeneity, but the interpretation of a cluster inertia
as an average squared distance to the center of gravity of the cluster (as in
Equation~\eqref{eq:inertia}) is lost. Since the two definitions, $I$ and
$\tilde{I}$ coincide for the Euclidean case, we will only use the notation $I$
in the sequel for the sake of simplicity, even when the data are non Euclidean
dissimilarity data.

The above approach based on pairwise dissimilarities and pseudo-intertia may be used to recover generalizations of Ward-based HAC to non-Euclidean distances already proposed in the literature. In particular, the Ward HAC algorithm associated to $d_{ij} =  \| x_i - x_j \|^{\alpha/2}_{\mathbb{R}^p}$ for $0 < \alpha \leq 2$ and $d_{ij} =  \| x_i - x_j \|_{1, \mathbb{R}^p}$ (the latter is also called the Manhattan distance) correspond to the methods proposed by \cite{szekely_rizzo_JC2005} and \cite{strauss_vonmaltitz_PO2017}, respectively. 

\begin{rem}
  \cite{szekely_rizzo_JC2005} and \cite{strauss_vonmaltitz_PO2017} take a
  different point of view: they \emph{define} the linkage between two clusters
  by~\eqref{eq:Ward-dist} (up to a scaling factor 1/2); their generalized HAC is
  then the HAC associated to this linkage. Then, they \emph{prove} that the
  Lance-Williams Equation~\eqref{eq:lance-williams_ward} is still valid for this
  linkage. We favor the above construction by \cite{chavent_etal_CS2018}, which is simply based on \emph{pairwise} dissimilarities, as it is more intrinsic. It provides a justification for the linkage formula, and the Lance-Williams formula is automatically valid with no proof required. 
\end{rem}

Finally, there is an ambiguity in the definition of the pseudo-inertia as an
extension of the Ward's case. If most authors consider that the dissimilarity is
associated to a distance and therefore define the pseudo-inertia based on the
squared values $d_{ij}^2$, some authors (as \cite{strauss_vonmaltitz_PO2017})
define a linkage equal to the one that would have been obtained with Ward's 
linkage and a pseudo-inertia described as $\frac{1}{2|G|} \sum_{x_i, x_j \in G} 
d_{ij}$. This ambiguity has long been enforced by popular implemented versions 
of the algorithm, as it was the case in the \RR{} function \texttt{hclust} 
before \cite{Murtagh2014} raised and corrected this problem.

\subsection{Extension to kernel data} 
\label{sec:extens-kern-data}
In some cases, proximity relations between objects are described by their
resemblance instead of their dissimilarity. We start with the case when the data
are described by a kernel matrix. A kernel matrix is a symmetric
positive-definite matrix $K=(k_{ij})_{1\leq i,j \leq n}$ whose entry $k_{ij}$
corresponds to a measure of resemblance between $x_i$ and $x_j$. Here, contrary
to the Euclidean setting, no specific structure is assumed for $\Omega$, which
can be an arbitrary set.
 
\cite{Aronszajn1950} has proved that there exists a unique Hilbert space 
$\mathcal{H}$ equipped with the inner product $\langle \cdot , \cdot 
\rangle_\mathcal{H}$ and a unique map $\phi: \Omega \longrightarrow 
\mathcal{H}$, such that $k_{ij}= \langle\phi(x_i), \phi(x_j) 
\rangle_\mathcal{H}$. This allows to consider the associated distance in 
$\mathcal{H}$ between any two elements $\phi(x_i)$ and $\phi(x_j)$ for $x_i,x_j 
\in \Omega$, that implicitly defines a Euclidean distance in $\Omega$ by:
\[
	d_{ij}=d(x_i,x_j):=\|\phi(x_i)-\phi(x_j)\|_\mathcal{H} \,,
\]
so that 
\begin{equation}
\label{eq:kernel-squared-distance}
 d_{ij}^2= k_{ii}+k_{jj}-2k_{ij}\,.
\end{equation}
Therefore, it is possible to use Algorithm~\ref{algo:HAC} for kernel data, even
when $\mathcal{H}$ is not known explicitly and/or when it is not
finite-dimensional. This is an instance of the so-called ``kernel trick''
\citep{Scholkopf2002}. The associated Ward's linkage can itself be re-written
directly using sums of elements of the kernel matrix, as shown, e.g., in
\cite{Dehman2015}:
\begin{equation}
  \label{eq:ward_kernel}
  \delta(G_u,G_v)=\frac{|G_u||G_v|}{|G_u| + |G_v|} \left( 
\frac{R_{G_u,G_u}}{|G_u|^2} + \frac{R_{G_v,G_v}}{|G_v|^2} - 2 \frac{R_{G_u,G_v}}{|G_u||G_v|} \right)\,,
\end{equation}
where 
$R_{G_u,G_v}=\sum\limits_{(x_i,x_j) \in G_u\times G_v} k_{ij}$.

Contrary to the dissimilarity case described in
Section~\ref{sec:extens-diss-data}, the kernel case is a truly interpretable
generalization of Ward's original approach because Ward's linkage as calculated
in~\eqref{eq:ward_kernel} is the variation of within-cluster inertia in the
associated Hilbert space $\mathcal{H}$. This case has been described previously in
\cite{qin_etal_B2003,ahpine_wang_IDA2016}, for instance.


\subsection{Extension to similarity data}
\label{sec:extens-simil-data}

Similarity data also aim at describing pairwise resemblance relations between
the objects of $\Omega$ through a matrix of similarity (or proximity) measures
$S=(s_{ij})_{1\leq i,j \leq n}$. Even though the precise definition of a
similarity matrix can differ within the literature (see
\textit{e.g.}, \cite{hartigan_JASA1967}), it is generally far less constrained 
than kernel matrices. In most cases, the only conditions required to define a 
similarity is the symmetry of the matrix $S$ and the positivity of its diagonal. In some cases, similarity 
measures are also supposed to take non negative values, but we will not make 
this assumption in the present article. 
Since both similarities and kernels describe resemblance relations, it seems 
natural to try to extend the background of Section~\ref{sec:extens-kern-data} to
similarity datasets by using Equation~\eqref{eq:ward_kernel}. This allows the
definition of a linkage, $\delta_S$, between clusters via sums of elements of
$S$. However, this heuristic is not well justified since the quantity
$s_{ii}+s_{jj}-2s_{ij}$ is not necessarily non negative when $S$ is not a
positive definite kernel. Thus, it can not be associated to a squared distance 
as in Equation~\eqref{eq:kernel-squared-distance}.

The previous work of \cite{Miyamoto2015} has explicitly linked similarity and 
kernel data in HAC results. More precisely, for any given similarity $S$, the matrix $S^\lambda = 
(s^\lambda_{ij})_{1\leq i,j\leq n}$ such that $s^\lambda_{ij} := s_{ij} + 
\mathbf{1}_{\{i = j\}} \lambda$ is definite positive for any $\lambda$ larger 
than the absolute value of the smallest eigenvalue of $S$. Therefore, the kernel 
matrix $S^\lambda$ induces a well-defined linkage 
$\delta_{S^\lambda}$ via Equation~(\ref{eq:ward_kernel}), which is linked to 
$\delta_S$ by:
\[
	\delta_{S^\lambda}(G_u,G_v)=\delta_S(G_u,G_v)+\lambda\,.
\]
This proposition justifies the extension of Equation~(\ref{eq:ward_kernel}) to
similarity data with $R_{G_u,G_v}=\sum_{(x_i,x_j) \in G_u\times G_v} s_{ij}$. 
Using this heuristic is indeed equivalent to using a given kernel matrix 
obtained by translating the diagonal of the original similarity $S$: doing so, 
the clustering is unchanged and the linkage values are all translated from 
$+\lambda$ for the kernel matrix, which does not even change the global shape of 
the clustering representation when the heights in this representation are the 
values of the linkage (as discussed in Section~\ref{sec:monoticity}). However, 
as for general dissimilarity data in Section~\ref{sec:extens-diss-data}, the 
interpretation of the linkage as a variation of within-cluster inertia is lost.\\

\paragraph{Conclusion.} In conclusion to this section, there is finally only 
two cases left: the Euclidean case (in which objects are embedded in a direct or 
indirect manner in a Euclidean framework) and the non-Euclidean case. The first 
case includes the standard case, the case of Euclidean distance matrices and 
the case of kernels while the latter case includes general dissimilarity and 
similarity matrices. In the Euclidean case, the original description of the 
Ward's algorithm is valid as such while, in the second, the algorithm can still 
be formally applied in a very similar manner at the cost of a loss of the 
interpretability of the criterion.

\section{Interpretability of dendrograms}
\label{sec:monoticity}

\subsection{Dendrograms}
\label{sec:dendrograms}

The results of HAC algorithms are usually displayed as dendrograms. A dendrogram
is a binary tree in which each node corresponds to a cluster, and, in
particular, the leaves are the original objects to be clustered.  The edges
connect the two clusters (nodes) merged at a given step of the algorithm.  The
height of the leaves is generally supposed to be $h_0 = 0$. In the case of
OCHAC, these leaves are displayed as indicated by the natural ordering of the
objects, while in the general case of unconstrained HAC they are ordered by a
permutation of the class labels that ensures that the successive mergers are
neighbors in the dendrogram. The height of the node corresponding to the cluster
created at merger $t$, $h_t$, is often the value of the linkage. To distinguish 
the height of the dendrogram from the value of the linkage, we will denote by 
$m_t$ the value of the linkage at step $t$. Alternative choices for the values 
of $(h_t)_t$ are discussed in Section~\ref{sec:alt-height}.

Dendrograms are used to obtain clusterings by horizontal cuts of the tree
structure at a chosen height. A desirable property of a dendrogram is thus that
the clusterings induced by such a cut corresponds to those defined by the HAC
algorithm. This property is equivalent to the fact that the sequence of heights
is non-decreasing. When this \emph{monotonicity} property is not satisfied, a
merging step $t$ for which $h_t < h_{t-1}$, is called a
\emph{reversal}. Reversals can be of two types, depending on whether or not they
correspond to a visible \emph{crossover} between branches of the
dendrogram. Mathematically, a crossover corresponds to the case when the height
of a given merger $G_{v_1} \cup G_{v_2}$ is less than the height of $G_{v_1}$ or
the height of $G_{v_2}$. A toy example of reversal with crossover is shown in
Figure~\ref{fig::euclidean-reversal}, between nodes merged at steps 1 and 2, for
the result of OCHAC.
  \begin{figure}
    \centering 
    \begin{tabular}{m{0.5\linewidth} p{0.5\linewidth}}
      \includegraphics[width=\linewidth]{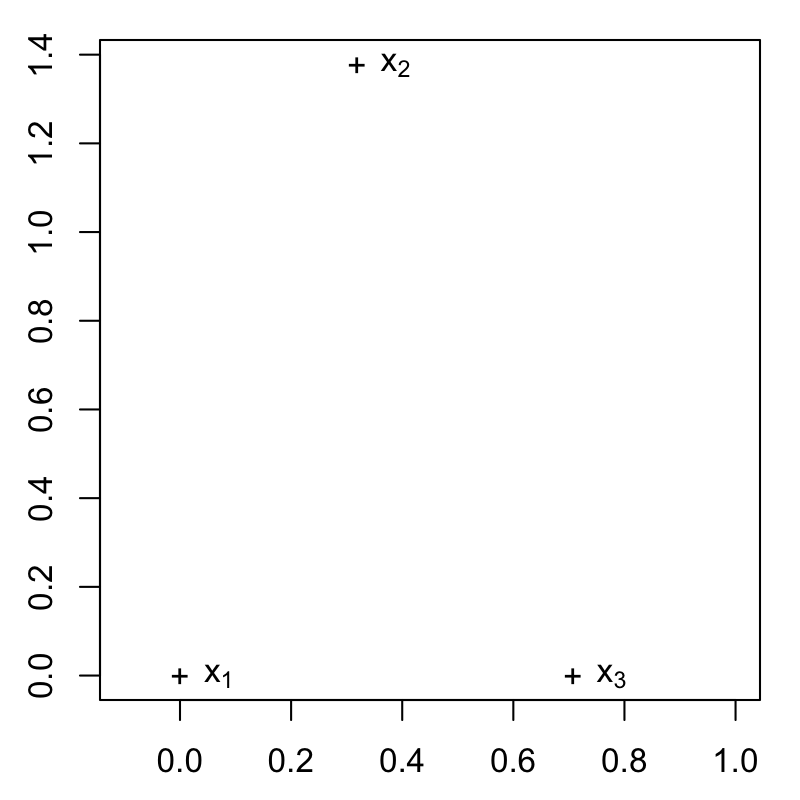} &
      \begin{minipage}{\linewidth}
        $\left(\begin{array}{c}
          x_1\\
          x_2\\
          x_3
        \end{array} \right) = \left(\begin{array}{cc}
          0 & 0\\
          0.45\sqrt{0.5} & \sqrt{1.89875}\\
          \sqrt{0.5} & 0
        \end{array} \right)$\\
        
        $D = \left( \begin{array}{ccc}
          0 & \sqrt{2} & \sqrt{0.5} \\ 
          \sqrt{2} & 0 & \sqrt{2.05} \\ 
          \sqrt{0.5} & \sqrt{2.05} & 0 \\ 
        \end{array}\right)$
      \end{minipage}
    \end{tabular}
    \includegraphics[width=\linewidth]{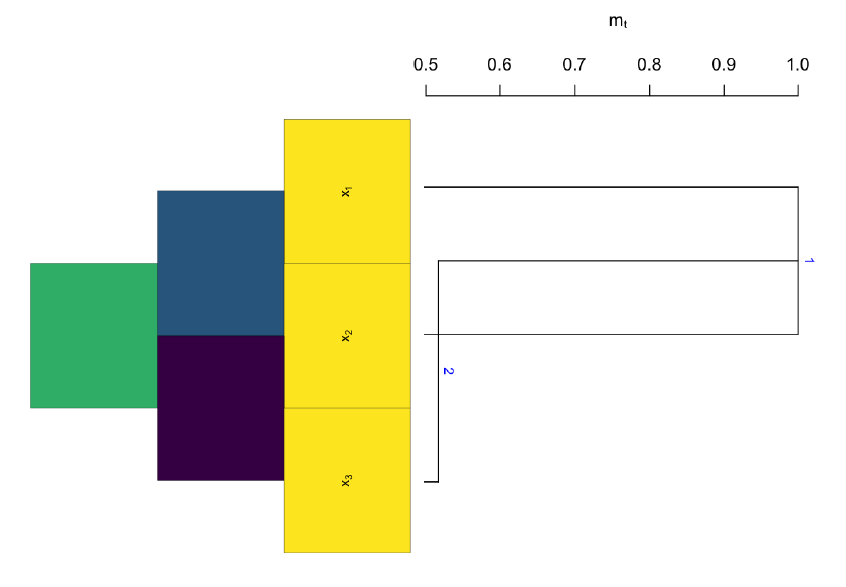}
    \caption{{\bf A crossover for Euclidean OCHAC with height defined as the
        linkage $m_t$.} Top left: Configuration of the objects in
      $\mathbb{R}^2$. Top right : Coordinates of the objects and Euclidean distance matrix corresponding to
      this configuration. Bottom left: Representation of the values of the
      Euclidean distance (dark colors correspond to larger values, so to distant
      objects). Bottom right: Dendrogram obtained from OCHAC (the ordering is
      indicated by the indices of objects) and with the height corresponding to
      Ward's linkage.}
    \label{fig::euclidean-reversal}
  \end{figure}

The goal of this section is to study which settings and which definitions of 
height guarantee the absence of reversals -- with and without crossovers.

\subsection{Monotonicity, crossovers and ultrametricity}

A crossover in a dendrogram automatically implies the non-monotonicity of the
sequence of heights. The converse is true when the height of the dendrogram
corresponds to the value of the linkage (or to a non-decreasing function of the
linkage) for the corresponding merger, by virtue of 
Proposition~\ref{lem:linkage-crossovers} below.
\begin{prop}
  \label{lem:linkage-crossovers}
  Consider a dendrogram whose sequence of heights $(h_t)_t$ is a non-decreasing
  transformation of the linkage values $(m_t)_t$. Then the only reversals that
  can occur are crossovers.
\end{prop}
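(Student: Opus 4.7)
The plan is to argue contrapositively about which pair of clusters could have been merged at the reversal step. Suppose there is a reversal at step $t$, that is $h_t < h_{t-1}$. Since $(h_t)_t$ is a non-decreasing function of $(m_t)_t$, this is equivalent to $m_t < m_{t-1}$. Let $G_{v_1}$ and $G_{v_2}$ denote the two clusters merged at step $t$, with linkage value $\delta(G_{v_1},G_{v_2}) = m_t$. To show that this reversal is a crossover, I need to prove that at least one of $G_{v_1}, G_{v_2}$ was itself \emph{created} at step $t-1$, because then its height is exactly $h_{t-1}$, and $h_t < h_{t-1}$ gives the desired crossover.

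The key observation is that if \emph{both} $G_{v_1}$ and $G_{v_2}$ already belonged to the partition $\mathcal{P}_{t-1}$ \emph{before} the merger of step $t-1$ (that is, neither is the newly created cluster at step $t-1$), then the linkage $\delta(G_{v_1},G_{v_2})$ depends only on these two clusters and not on the rest of the partition, so its value is the same at step $t-1$ and at step $t$, namely $m_t$. Since the algorithm at step $t-1$ selects the pair with minimal linkage among all pairs of contiguous clusters in $\mathcal{P}_{t-1}$, and since $(G_{v_1},G_{v_2})$ is still available and contiguous at step $t$ (hence was already contiguous at step $t-1$, contiguity relations only being extended by mergers), minimality at step $t-1$ forces $m_{t-1} \leq m_t$, contradicting $m_t < m_{t-1}$.

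Consequently, at least one of $G_{v_1}, G_{v_2}$, say $G_{v_1}$, must be the cluster produced by the merger of step $t-1$, so the height of the subtree rooted at $G_{v_1}$ in the dendrogram equals $h_{t-1}$. Since $h_t < h_{t-1}$, the branch joining $G_{v_1}$ to the new node created at step $t$ goes \emph{downward}, which is precisely the definition of a crossover given just before the proposition. This completes the argument.

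The main subtlety is the invariance step: one must carefully note that linkage is a function of the two clusters alone (as apparent from Ward's formula and from its dissimilarity and kernel analogs~\eqref{eq:Ward-dist} and~\eqref{eq:ward_kernel}), so that pairs not involving the step $t-1$ merger keep the same linkage value across steps. The case distinction itself is straightforward, and the conclusion follows immediately from the definition of a crossover as $h_t$ being smaller than the height of one of the two merged clusters.
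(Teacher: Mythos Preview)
Your proof is correct and follows essentially the same approach as the paper's: the key observation in both is that if the pair merged at step $t$ does not involve the cluster just created at step $t-1$, then this pair was already a candidate at step $t-1$, so optimality of $m_{t-1}$ precludes $m_t < m_{t-1}$. Your version is slightly more explicit about the invariance of the linkage and the preservation of contiguity for unaffected pairs, but the argument is the same.
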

The proof of Proposition~\ref{lem:linkage-crossovers} is not specific to Ward's 
linkage and is a simple consequence of the
  fact that the linkage is the objective function of the clustering:
\begin{proof}[Proof of Proposition~\ref{lem:linkage-crossovers}]
  Consider an arbitrary merger step of the HAC, characterized by the linkage 
value $m_t$. If the next merger does not involve the newly created cluster,
  then this merger was already a candidate at step $t$. Then, by optimality of
  the linkage value at step $t$, this merger can not be a reversal. Therefore, 
any
  reversal must involve the newly created cluster, and is thus a crossover.
\end{proof}
An important consequence of Proposition~\ref{lem:linkage-crossovers} is that 
when the height of the dendrogram is the corresponding linkage, the absence of 
crossovers is \emph{equivalent} to the monotonicity of the sequence of heights. 

We shall see in Section~\ref{sec:alt-height} that for an arbitrary height, the
absence of crossover in the dendrogram is not necessarily equivalent to the
monotonicity of the sequence of heights. The absence of crossover can be
characterized by a mathematical property of the cophenetic distance associated
to the heights of the dendrogram, called \emph{ultrametricity}. Formally, let us
define, for all $i,j \in \{1, \ldots, n\}$, the \emph{cophenetic distance}
$h_{ij}$ between $i$ and $j$ as the value of the height $h_{t^*}$ such that
$t^*$ is the first step (or the smallest merge number) such that the $i$-th and 
$j$-th objects are in the same cluster. $h$ is said to satisfy the ultrametric 
inequality if:
\begin{align*}
  \forall i,j,k \in \{1,...,n\},\qquad  h_{ij} \leq \max \{ h_{ik},h_{kj} \}.
\end{align*}
As announced, this property is key to ensure the monotonicity of the sequence 
of heights. More precisely, \cite{johnson_P1967} has defined an explicit
bijection between a hierarchy of clusterings with an associated sequence of 
non-decreasing ``heights'' (called ``values'' in the article) and matrix of 
values with a diagonal equal to zero and satisfying the ultrametric inequality. 
It turns out that this bijection explicitly defines the entries of the 
ultrametric matrix as the cophenetic distance of the dendrogram whose heights 
are the one of the associated hierarchy of clusterings. In other words, this 
means that a given sequence of heights defining a dendrogram is non-decreasing 
if and only if the cophenetic distance associated to this dendrogram (or 
equivalently to this sequence of heights) satisfies the ultrametric inequality.

\subsection{Monotonicity of Ward's linkage}

Ward's linkage corresponds to the variation of within-cluster inertia, so that
the monotonicity of the linkage is ensured for Ward's standard HAC
algorithm with Euclidean data. More generally, \cite{Batagelj1981} gives
necessary and sufficient conditions based only on the Lance-Williams
coefficients that ensures monotonicity for a given linkage. These results apply
to the extensions of HAC to non Euclidean datasets and show that the 
monotonicity 
of the linkage values is always ensured for standard HAC with Ward's linkage. In
addition, \cite{Ferligoj1982} give necessary and sufficient conditions on the
Lance-Williams coefficients to ensure the monotonicity of the linkage values in
constrained HAC, for an arbitrary symmetric relational constraint. These
conditions are not fulfilled for Ward's linkage. Therefore, monotonicity is not
guaranteed for CCHAC with Ward's linkage, as also noted by \cite{Grimm1987} for
the specific case of OCHAC. It can be shown that even for Euclidean data, the
contiguity constraint can induce non increasing linkage values for some steps of
the algorithm, as illustrated by Figure~\ref{fig::euclidean-reversal}. 

More precisely, if we consider OCHAC, the following 
proposition establishes necessary and sufficient conditions on a dissimilarity 
$d$ to observe a reversal at a given step of OCHAC when the height is defined by 
Ward's linkage:

\begin{prop}
  \label{prop::decreasing}
  Suppose that $\Omega = \{x_i\}_{i=1, \ldots, n}$ is equipped with the 
symmetric contiguity relation $x_i \mathcal{R} x_j \Leftrightarrow \vert i-j 
\vert=1$ (OCHAC). Denote by $l$ and $r$ the indices of the left and right 
clusters merged at a given step $t$, and by $\bar{l}$ and $\bar{r}$ their own 
left and right cluster, respectively.  Then there is a reversal at step $t+1$ 
for the height defined by the linkage if and only if:
\begin{align}
\label{eq:decreasing}  
  \delta(G_l, G_r) \geq \min\left(
\frac{g_{\bar{l}l} \delta(G_{\bar{l}}, G_{l}) + g_{\bar{l}r} \delta 
(G_{\bar{l}}, G_{r})}{g_{\bar{l}l} + g_{\bar{l}r}}
,
\frac{g_{l\bar{r}} \delta(G_l, G_{\bar{r}}) + g_{r\bar{r}} \delta (G_r, 
G_{\bar{r}})}{g_{l\bar{r}} + g_{r\bar{r}}}
\right)\,
\end{align}
where we have used the notation $g_{uv} := |G_u \cup G_v| = |G_u| + |G_v|$.
\end{prop}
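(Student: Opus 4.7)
The plan is to combine Proposition~\ref{lem:linkage-crossovers} with the Lance-Williams update formula~\eqref{eq:lance-williams_ward}. By Proposition~\ref{lem:linkage-crossovers}, any reversal at step $t+1$ must involve the newly created cluster $G_l \cup G_r$; this carries over to OCHAC because the contiguity extension rule does not remove any pre-existing adjacencies, so any merger not involving the new cluster was already a candidate at step $t$ and hence has linkage at least $m_t = \delta(G_l, G_r)$. Under the order constraint, the new cluster has at most two contiguous neighbors, $G_{\bar{l}}$ and $G_{\bar{r}}$ (one of these is absent if the merger sits at a boundary of the chain). It follows that there is a reversal at step $t+1$ if and only if
\[
\min\bigl(\delta(G_{\bar{l}}, G_l \cup G_r),\ \delta(G_l \cup G_r, G_{\bar{r}})\bigr) \leq \delta(G_l, G_r).
\]

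Next, I would expand each of these two linkages via~\eqref{eq:lance-williams_ward}, applied respectively to the triples $(G_{\bar{l}}, G_l, G_r)$ and $(G_l, G_r, G_{\bar{r}})$. For the left neighbor, clearing the common denominator $|G_{\bar{l}}| + |G_l| + |G_r|$ turns $\delta(G_{\bar{l}}, G_l \cup G_r) \leq \delta(G_l, G_r)$ into
\[
g_{\bar{l}l}\, \delta(G_{\bar{l}}, G_l) + g_{\bar{l}r}\, \delta(G_{\bar{l}}, G_r) \leq \bigl(2|G_{\bar{l}}| + |G_l| + |G_r|\bigr)\, \delta(G_l, G_r),
\]
and the elementary identity $2|G_{\bar{l}}| + |G_l| + |G_r| = g_{\bar{l}l} + g_{\bar{l}r}$ rewrites this as the first of the two bounds in~\eqref{eq:decreasing}. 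The symmetric computation on the right yields the second bound, and combining the two characterizations through the minimum gives the proposition.

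The hard part is almost entirely conceptual rather than technical: one must justify that Proposition~\ref{lem:linkage-crossovers} still applies in the constrained setting, for which the essential point is that the contiguity relation is \emph{extended} rather than modified among old clusters, so candidate mergers not involving the new cluster are unchanged from step $t$ to step $t+1$. Beyond this, the remaining work is routine algebra; the only minor subtleties are the boundary cases (where the minimum in~\eqref{eq:decreasing} collapses to a single term because either $G_{\bar{l}}$ or $G_{\bar{r}}$ is missing) and the convention for strict versus non-strict inequality, which corresponds to whether ties $m_{t+1} = m_t$ are recorded as reversals.
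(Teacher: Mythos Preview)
Your proof is correct and follows essentially the same route as the paper: invoke Proposition~\ref{lem:linkage-crossovers} to reduce a reversal at step $t+1$ to a crossover involving the newly created cluster $G_l \cup G_r$, observe that under the order constraint its only contiguous neighbors are $G_{\bar{l}}$ and $G_{\bar{r}}$, and then expand $\delta(G_l\cup G_r, G_{\bar{l}})$ and $\delta(G_l\cup G_r, G_{\bar{r}})$ via the Lance--Williams formula~\eqref{eq:lance-williams_ward} to obtain the two terms in~\eqref{eq:decreasing}. Your write-up is in fact slightly more explicit than the paper's, since you spell out the identity $2|G_{\bar{l}}|+|G_l|+|G_r|=g_{\bar{l}l}+g_{\bar{l}r}$, justify why Proposition~\ref{lem:linkage-crossovers} carries over to the constrained setting, and note the boundary and tie-breaking conventions.
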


The fact that Condition~\eqref{eq:decreasing} involves clusters contiguous to 
the last merger is a consequence of Proposition~\ref{lem:linkage-crossovers}. 
The formulation of Condition~\eqref{eq:decreasing} is quite intuitive: 
crossovers correspond to situations in which the Ward linkage between two newly 
merged clusters is larger than a (weighted) average Ward linkage between each of 
these two clusters and one of the contiguous clusters. The proof of 
Proposition~\ref{prop::decreasing} is given in 
Appendix~\ref{sec:proof-decreasing}.

Let us apply Proposition \ref{prop::decreasing} to the specific case of the 
first and second mergers in the algorithm.  Assuming that the optimal merger at 
step 1 is between the $l$-th and $r$-th objects, and recalling that the Ward 
linkage between two singletons is simply $\delta(\{u\},\{v\}) = d^2_{uv}/2$, 
Condition~\eqref{eq:decreasing} reduces to:
$$ 2d_{l, r}^2 > \min\left(d^2_{\bar{l}, l} + d^2_{\bar{l}, r}, d^2_{r, \bar{r}} 
+ d^2_{l, \bar{r}}\right)$$
In particular, given the $p-1$ distances $(d^2_{i,i+1})_{1 \leq i \leq p-1}$ 
that determine the first step of the OCHAC algorithm, it is always possible to 
find an adversarial dissimilarity yielding a reversal at the second step, 
\textit{e.g.}, by choosing $d_{l, \bar{r}}$ such that $d^2_{l, \bar{r}} < 
2d^2_{l, r} - d^2_{r, \bar{r}}$. This is the case in the counter-example of 
Figure~\ref{fig::euclidean-reversal}.

\paragraph{An example of relevant  reversal for OCHAC.}
Because of the possible presence of crossovers in OCHAC even in a simple
Euclidean setting, CCHAC may appear as a deteriorated version of standard HAC,
where the optimal merger is chosen within a reduced set of possible mergers
compared to the unconstrained version. One may then expect that the total
within-cluster inertia at a given step of the algorithm is larger than for the
unconstrained version that chooses the ``optimal'' merger at this step (that is,
the merger with the smallest increase of the total within-inertia). In addition,
the algorithm does not necessarily exhibit a clear and understandable monotonic
evolution of the objective criterion, $(m_t)_t$. However, it can be shown, even
in a very simple example, that OCHAC can lead to better solutions in terms of
within-cluser intertia, when the constraint is consistent to the spatial
structure of the data. This fact is illustrated in
Figure~\ref{fig::outperforms} (detailed analysis of all examples 
and
  counter-examples of this section is provided in
  Appendix~\ref{sec:counterexamples_standardHAC}). In this example, 7 data
points are displayed in $\mathbb{R}^2$ with an order constraint illustrated by a
line linking two points allowed to be merged. In this situation, $(m_t)_t$ is
indeed non monotonic for OCHAC (bottom left figure) but leads to a better total
within-cluster inertia for $k=3$ clusters (vertical green line), which is also
more relevant for the data configuration (top figures). This is a typical case
where the constraint forces the algorithm to explore under-efficient
configurations but that can be aggregated into a better solution, contrary to
the unconstrained algorithm. This is explained by the fact that even the 
unconstrained algorithm is greedy, by construction, and thus not optimal 
compared to an exhaustive search of the best partition in $k$ classes.
\begin{figure}[ht]
  \centering
  \begin{tabular}{p{0.5\linewidth} p{0.5\linewidth}}
  \includegraphics[width=0.95\linewidth]{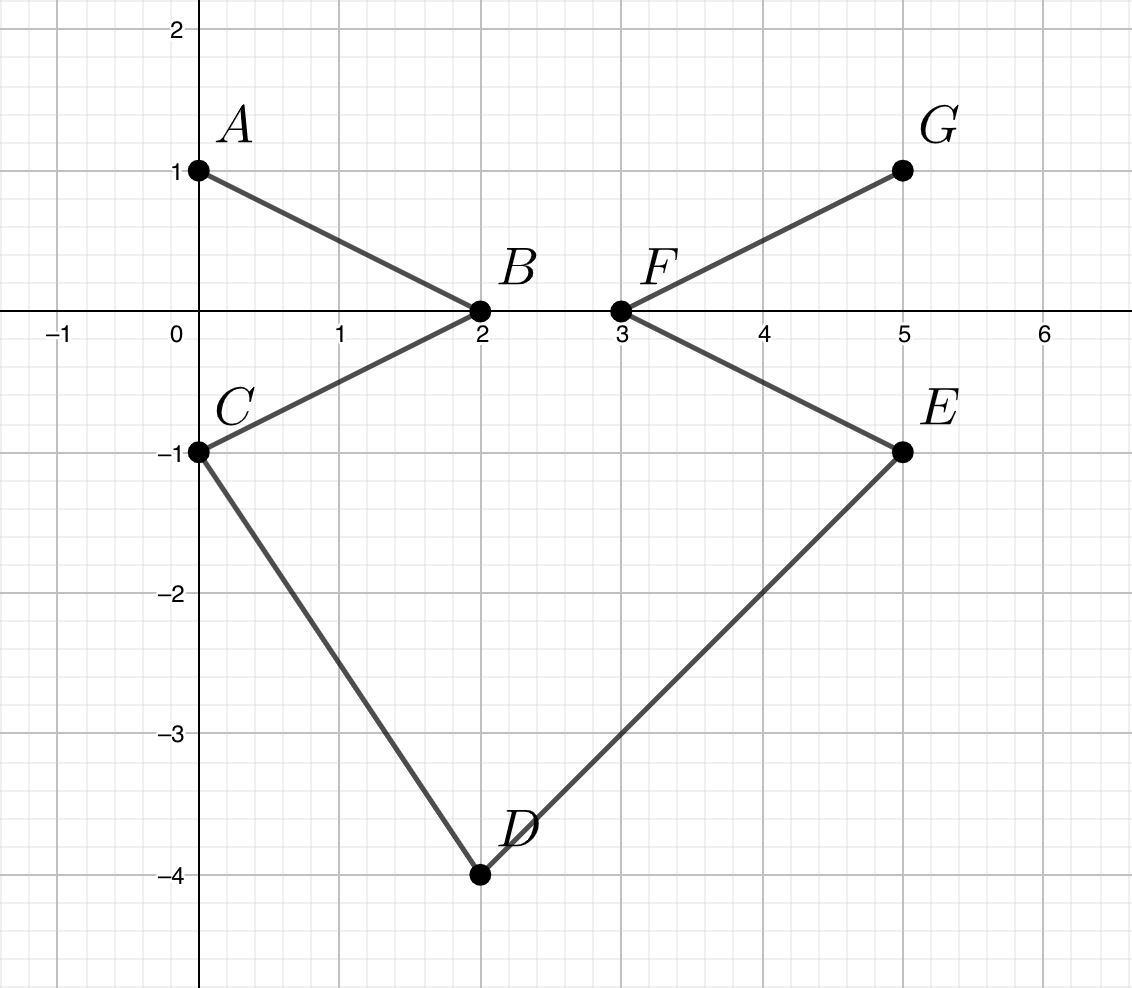} &
  \includegraphics[width=0.95\linewidth]{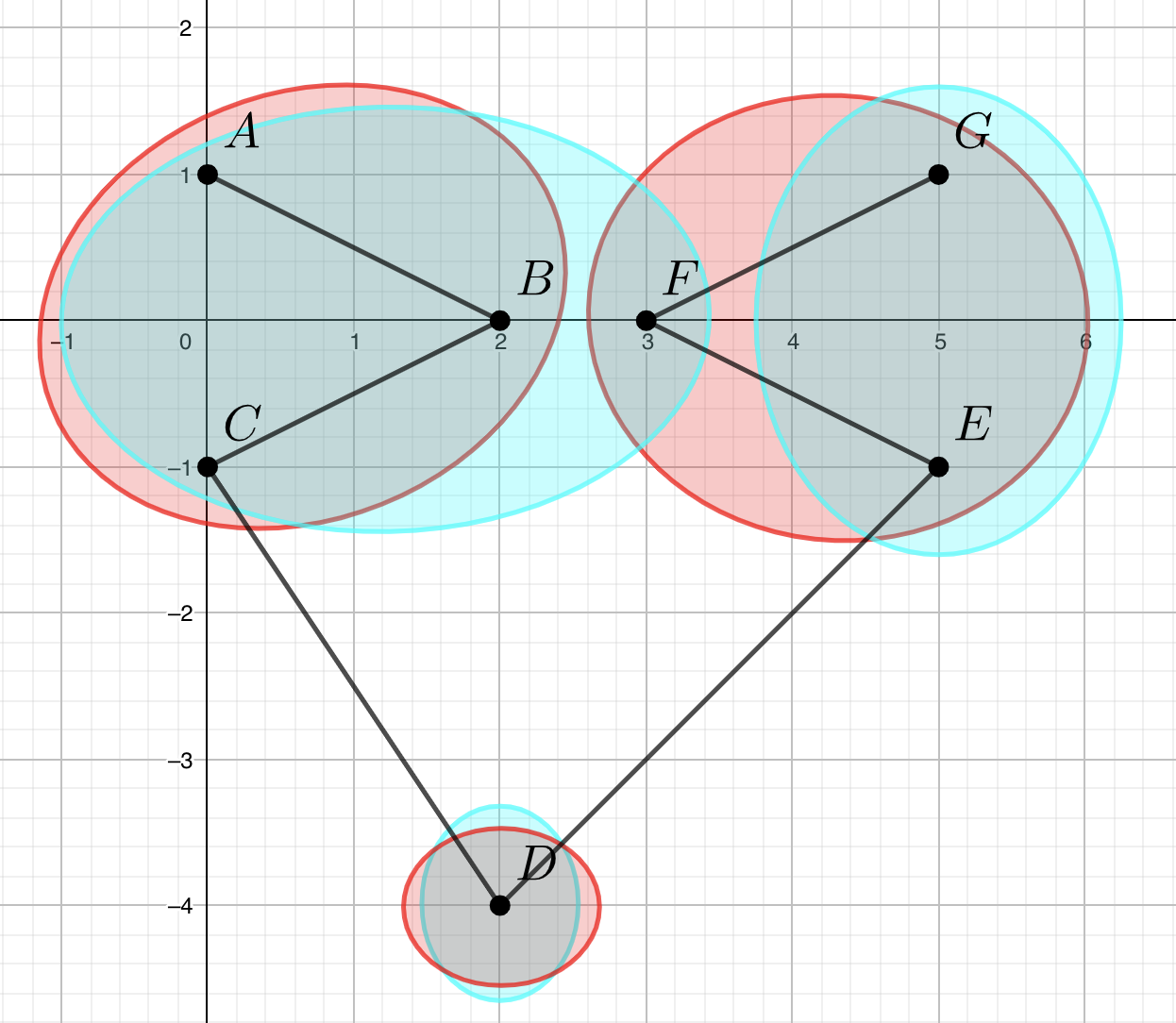}
   \end{tabular}
  \includegraphics[width=0.9\linewidth]{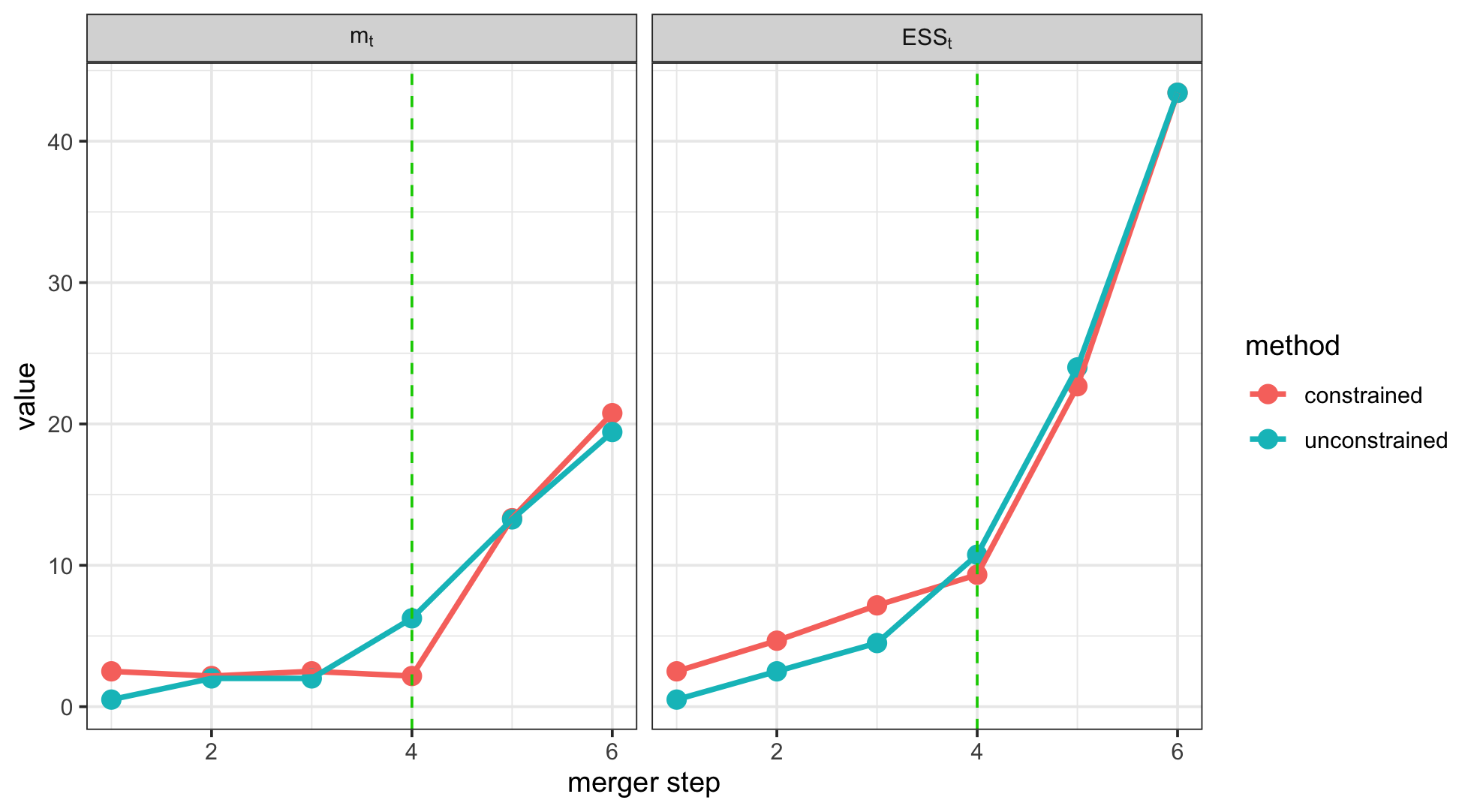}
  \caption{\textbf{Simple configuration in which OCHAC outperforms standard 
HAC.} Top left: Initial configuration with the order constraint represented by 
straight lines. Top right: Clustering with 3 clusters as 
produced by OCHAC (red) and standard HAC (blue). Bottom: Evolution of $(m_t)_t$ 
and of the 
total within-cluster inertia (also called, Error Sum of Squares: (ESS$_t)_t$) 
along the clustering processes, the green line correspond to the 3 components 
clustering.}
  \label{fig::outperforms}
\end{figure}

\clearpage
\subsection{Monotonicity of alternative  heights}
\label{sec:alt-height}

Since reversals can occur in CCHAC dendrograms with Ward's linkage, alternative 
definitions of the height have been proposed to improve the interpretability 
of the result in this case. They are defined as quantities related to the 
heterogeneity of the partition. In this section, we study the monotonicity of 
such 
alternative heights.

\cite{Grimm1987} presents three alternative heights to the standard 
\emph{variation of within-cluster inertia} ($m_t$):
\begin{itemize}
  
	\item the \emph{within-cluster (pseudo-)inertia} (or \emph{Error Sum of 
Squares}) that corresponds to the value of the objective function. In this 
case, 
the height at step $t$ is given by: 
	\[
		\mbox{ESS}_t = \sum_{u=1}^{n-t} I(G_u^{t+1}),
	\]
	where $\mathcal{P}^{t+1} = \{ G_u^{t+1} \}_{u=1,\ldots,n-t}$ is the partition 
obtained at step $t$ of the algorithm. This alternative height is very natural 
(and the one implemented in the \RR{} package \pkg{rioja} \citep{juggins_2012} for OCHAC) since it 
corresponds to the criterion whose minimization is approximated by HAC (and 
OCHAC) in a greedy way;
	\item the \emph{(pseudo-)inertia of the current merger}, which is defined as:
	\[
    I_t = I(G_u^t \cup G_v^t)
	\]
  where $G_u^t$ and $G_v^t$ are the two clusters merged at step $t$. 
\cite{Grimm1987} remarks that this 
measure is very sensitive to the cluster size $|G_u^t| + |G_v^t|$. 

\item the \emph{average (pseudo-)inertia of the current merger}, that has been
  designed so as to avoid the bias related to the cluster size in $I_t$. It is
  defined as:
  \[
    \overline{I}_t = \frac{I_t}{|G_u^t| + |G_v^t|}
  \]
\end{itemize}

\paragraph{Standard HAC: Known properties of alternative heights.} Note that 
$\mbox{ESS}_t = \sum_{t' \leq t} m_{t'}$. As explained in 
Section~\ref{sec:monoticity}, 
$(m_t)_t$ is monotonic for standard HAC, both for Euclidean and non Euclidean 
data. Since $m_0 = 0$ by definition, this ensures the monotonicity of 
$(\mbox{ESS}_t)_t$, for Euclidean and non Euclidean data in the case of 
standard 
HAC. 

On the contrary, $I_t$ and $\overline{I}_t$ may induce reversals even for 
standard HAC and Euclidean data. More importantly, contrary to 
the case when the height of the dendrogram is $m_t$, even when the ultrametric 
property is satisfied, the monotonicity is not ensured for these criteria. 
This is illustrated in Figure~\ref{fig::I} (and in Figure~\ref{fig::AI} of the 
Appendix~\ref{sec::counter_barI}), for $I_t$ (and for $\bar{I}_t$, 
respectively) 
and data in $\mathbb{R}^2$. 
\begin{figure}
  \centering 
  \begin{tabular}{m{0.5\linewidth} p{0.5\linewidth}}
    \includegraphics[width=\linewidth]{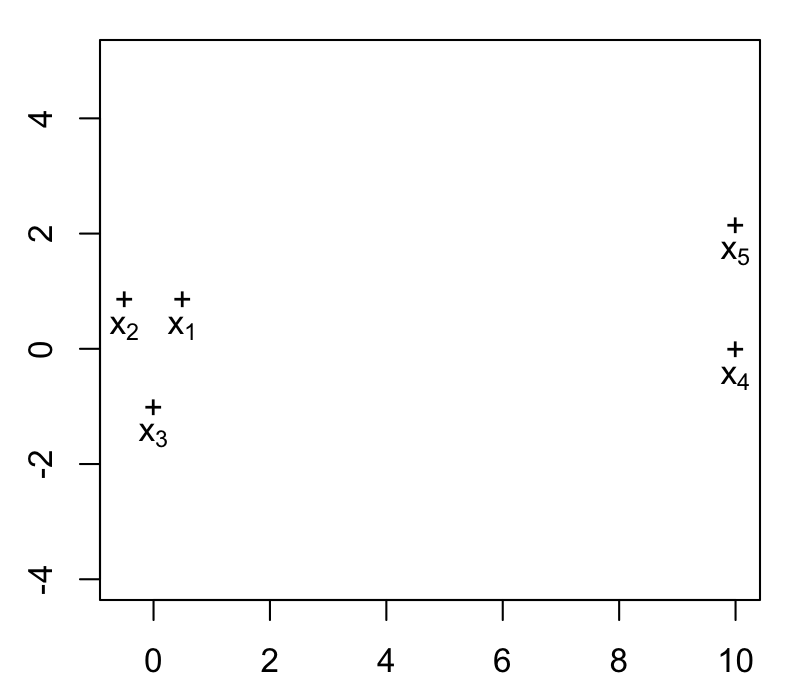} &
    \begin{minipage}{\linewidth}
    $\left(\begin{array}{cc}
    x_1\\
    x_2\\
    x_3\\
    x_4\\
    x_5
  \end{array} \right) =
  \left(\begin{array}{cc}
    \frac{1}{2} & \frac{\sqrt{3}}{2} \\
    -\frac{1}{2} & \frac{\sqrt{3}}{2} \\
    0 & -1\\
    10 & 0\\
    10 & 2.16
  \end{array} \right)$\\
  
$D \approx
\left(\begin{array}{ccccc}
0.00 & 1.00 & 1.93 & 9.54 & 9.59 \\ 
1.00 & 0.00 & 1.93 & 10.54 & 10.58 \\ 
1.93 & 1.93 & 0.00 & 10.05 & 10.49 \\ 
9.54 & 10.54 & 10.05 & 0.00 & 2.16 \\ 
9.59 & 10.58 & 10.49 & 2.16 & 0.00 
\end{array}\right)$
\end{minipage}
  \end{tabular}
  \includegraphics[width=\linewidth]{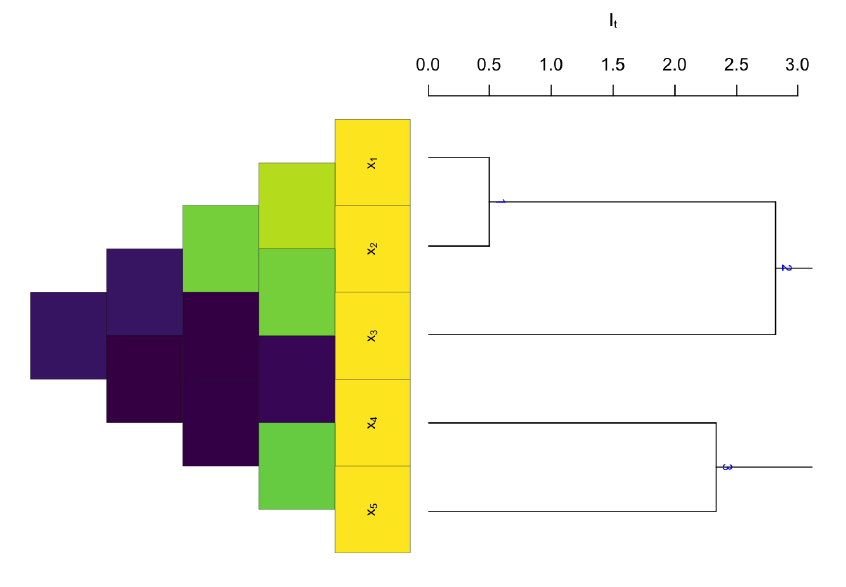}
  \caption{{\bf A reversal for Euclidean standard HAC with height defined as 
$I_t$.} Top left:
    Configuration of the objects in $\mathbb{R}^2$. Top right: Coordinates of 
the objects and Euclidean distance matrix corresponding to
      this configuration. Bottom left: Representation
    of the values of the dissimilarity (dark colors correspond to larger values,
    so to distant objects). Bottom right: dendrogram obtained from standard
    HAC. Only the first 3 merges of the dendrogram is represented to ensure a
    comprehensive view of the sequence of heights.}
  \label{fig::I}
\end{figure}

  In this case, the dendrogram has a conventional look but the mergers are not 
ordered by increasing heights. For instance, in Figure~\ref{fig::I}, the cluster 
merged at step 2 
is above the one at step 3. Hence, cutting the dendrogram at height $h = 2.5$ 
leads to a clustering into $\{x_1, x_2\}, \{x_3\}, \{x_4, x_5\}$, but this 
clustering does not belong to the sequence of clusterings induced by the HAC 
(where the clustering in 3 clusters is  the one obtained after the second 
merger, that is, $\{x_1, x_2, x_3\}, \{x_4\}, \{x_5\}$).

\paragraph{CCHAC: Known properties of alternative heights.}
Figures~\ref{fig::I} and \ref{fig::AI} (the latter in
Appendix~\ref{sec::counter_barI}) provide counter-examples for the monotonicity 
of
$(I_t)_t$ and $(\bar{I}_t)_t$ in the Euclidean case for HAC. If the objects are
pre-ordered as the nodes in these figures, then OCHAC and standard HAC give 
identical
hierarchical clusterings. Therefore, these examples also provide
counter-examples for the monotonicity of $(I_t)_t$ and $(\bar{I}_t)_t$ in the
Euclidean case for OCHAC, and show that there is no guarantee for monotonicity 
in the case of general CCHAC. The fact that $(\bar{I}_t)_t$ is not necessarily
monotonous for OCHAC has already been mentioned by \cite{Grimm1987}. 

\paragraph{CCHAC: Within-cluster pseudo-inertia for dissimilarity data.} The
only unanswered case is whether $(\mbox{ESS}_t)_t$ is monotonic or not for CCHAC
and non Euclidean data. We provide a counter-example that proves that the
monotonicity is not ensured in this case: Figure~\ref{fig::ESS} shows that the
dendrogram obtained from OCHAC on a given non-Euclidean dissimilarity $D$ 
contains a crossover $(m_4 < m_3)$. In particular, the associated sequence of 
heights is not monotonic. However, Proposition~\ref{lem:linkage-crossovers} 
ensures that $(\mbox{ESS}_t)_t$ has the nice property that the absence of 
crossovers is equivalent to its monotonicity. Indeed, as $(\mbox{ESS}_t)_t$ 
corresponds to the cumulative sums of the linkage $(m_t)_t$, the mapping 
between $m_t$ and ESS$_t$ is equal to the addition of ESS$_{t-1}$. As, by 
definition, ESS$_{t-1}$ is, as any $I(G_u^{t-1})$, positive, this ensures that 
this mapping is non-decreasing.

\begin{figure}
\centering 
\begin{tabular}{m{0.5\linewidth}}
\begin{minipage}{\linewidth}
$D=\left(\begin{array}{cccccc}
 0 & \sqrt{1.99} & \sqrt{1.99} & \sqrt{1.99} & 0.1 & 1 \\
 \sqrt{1.99} & 0 & \sqrt{2} & \sqrt{1.99} & 0.1 & 1 \\
 \sqrt{1.99} & \sqrt{2} & 0 & \sqrt{2} & 0.1 & 1 \\
 \sqrt{1.99} & \sqrt{1.99} & \sqrt{2} & 0 & \sqrt{2} & 1 \\
0.1 & 0.1 & 0.1& \sqrt{2} & 0 & \sqrt{2} \\
 1 & 1 & 1 & 1 & \sqrt{2} & 0 
\end{array}\right)$
\end{minipage}
\end{tabular}
\includegraphics[width=\linewidth]{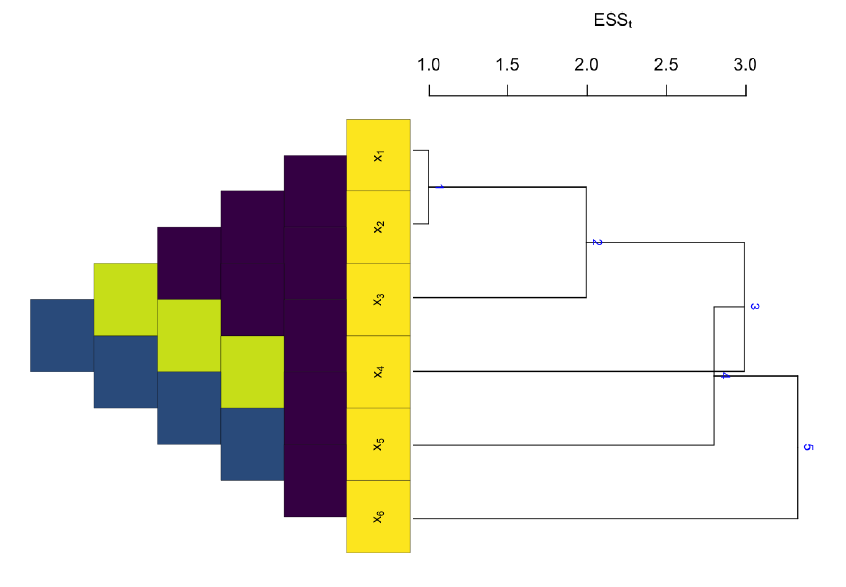}
\caption{{\bf A reversal for non-Euclidean OCHAC with height defined as 
$\mbox{ESS}_t$}. Top: Dissimilarity matrix. Bottom left: Representation of the 
values of the 
dissimilarity $D$ (dark colors correspond to larger values, so to distant 
objects). Bottom right: Dendrogram obtained from OCHAC (the ordering is 
indicated by the indices of objects) and with the height corresponding to 
ESS$_t$. }
\label{fig::ESS}
\end{figure}
Table~\ref{table::summary} summarizes the properties
of the different types of heights, respectively for standard HAC and CCHAC. Note 
that the monotonicity of $\textrm{ESS}_t$ is a consequence of the positivity of 
$m_t$.

\begin{table}[ht]
  \centering
  {\footnotesize \begin{tabular}{c c|l|l|l|l}
    \toprule
    &
    & $m_t$
    & $\textrm{ESS}_t$
    & $I_t$
    & $\overline{I}_t$ \\ \midrule 
    {\normalsize HAC} & Euclidean & \ding{51}\cite{ward_JASA1963} &  
\ding{51}\cite{ward_JASA1963} & \ding{53} [Fig.~\ref{fig::I}] &  
\ding{53}[Fig.~\ref{fig::AI}] \\
    &Non Euclidean & \ding{51}\cite{Batagelj1981} &  
\ding{51}\cite{Batagelj1981}   & \ding{53}  [Fig.~\ref{fig::I}] &  
\ding{53}[Fig.~\ref{fig::AI}]\\
    \midrule
    {\normalsize CCHAC} & Euclidiean & \ding{53}\cite{Grimm1987} &  
\ding{51}\cite{Grimm1987}  & \ding{53} [Fig.~\ref{fig::I}] & 
\ding{53}\cite{Grimm1987} \\
    &Non Euclidean & \ding{53}\cite{Grimm1987} &  \ding{53} 
[Fig.~\ref{fig::ESS}]  & \ding{53} [Fig.~\ref{fig::I}] &  
\ding{53}\cite{Grimm1987}\\
    \bottomrule
  \end{tabular}}\\
  \caption{Monotonicity of heights for standard HAC (top) and CCHAC 
(bottom).}
  \label{table::summary}
\end{table}

\clearpage
\section{Simulation}
\label{sec:simulations}

HAC can be seen as a greedy algorithm to solve the problem of finding the
partition with minimal within-cluster inertia ESS$_t$ of $n$ objects into $n-t$
classes, for each $t = 1 \dots n-1$. It may be expected that the inertia of the
partitions will be lower for HAC than OCHAC, since the possible mergers in OCHAC
are chosen among a subset of the possible mergers in HAC. Can we
quantify the impact of the order constraint on the quality of the partitions (as
measured by ESS) obtained for HAC and OCHAC, depending on the strength of the
actual order structure in the data? In this section, we address this question by
analyzing Hi-C data \citep{dixon_etal_N2012}, which present a strong order
structure, as illustrated by Figure~\ref{fig::hicmap}. We use a perturbation
process to progressively break the consistency between the data structure and
the constraint imposed in OCHAC. 

\subsection{Data and method}

Hi-C studies aim at characterizing proximity relationships in the 3D structure
of a genome, by measuring the frequency of physical interaction between pairs of
genomic locations via sequencing experiments. Formally, an Hi-C map is a
symmetric matrix $S=(s_{ij})_{i,j}$ in which each entry $s_{ij}$ is equal to the
frequency of interaction between genomic loci $i$ and $j$. Here, a locus is a
fixed-size interval of genomic positions, also called a ``bin''. Hi-C maps are
classically represented by the upper triangular part of the matrix, as shown in
Figure~\ref{fig::hicmap}. The matrix has a strong diagonal structure that
reflects the linear order of DNA within chromosomes (loci that are close along
the genome are more frequently interacting than distant loci). An important
question in Hi-C studies is to identify Topologically Associating Domains
(TADs), which are self-interacting genomic regions appearing to be more compact
than the rest of the genome. Indeed, TADs have been shown to play an important
role in gene regulation \citep{dixon_etal_N2012}. A number of TAD detection
methods have been proposed (see \textit{e.g.}, \cite{zufferey2018comparison} for
a review) and some are based on HAC or OCHAC
\citep{fraser_etal_MSB2015,haddad_etal_NAR2017,ambroise_etal_p2018}. This is
both natural, since Hi-C maps can be seen as similarity matrices, and formally
justified, as explained in Section~\ref{sec:extens-simil-data}.

The simulations in this section are based on a single chromosome (chromosome 3)
from an experiment in human embryonic stem cells (hESC;
\cite{dixon_etal_N2012}\footnote{The pre-processed and normalized data have been
  downloaded from the authors' website at
  \url{http://chromosome.sdsc.edu/mouse/hi-c/download.html} (raw sequence data
  are also published on the GEO website, accession number GSE35156).}). The
downloaded Hi-C matrix contains 4,864 bins. It has been obtained with a
bin size of 40kb and normalized using ICE \citep{imakaev_etal_NM2012}. We
further performed a log-transformation of the entries to reduce the distribution
skewness prior clustering.

\begin{figure}
\centering 
\includegraphics[width=0.5\linewidth]{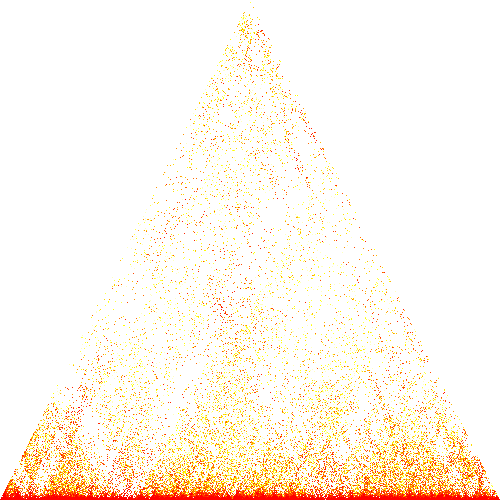}
\caption{{\bf Graphical representation of an Hi-C map}. Horizontal position is defined by the indices of bins within a single chromosome. Intensities of the frequency of physical interaction between bins are represented by levels of red.}
\label{fig::hicmap}
\end{figure}

In order to assess the influence of the data structure on the quality of the
partitions obtained by OCHAC and standard HAC algorithms, we have used a
perturbation process to progressively remove the strong diagonal in the original
Hi-C map. The perturbation consists in swapping two entries, $s_{ij}$
and $s_{i'j'}$ of the matrix, in which $(i,j)$ and $(i',j')$ have been randomly
sampled with uniform probability among the pairs $\{(u,v),\, (u',v')\}$ for
which $u \leq v$, $u' \leq v'$ and $s_{uv}+s_{u'v'} > 0$, where the last
condition avoids swapping entries that are both zero. The proportion of such
swapped pairs, which we call perturbation level, varied from 0\% up to 90\%
(Figure~\ref{fig::perturbation}).

This process was repeated 50 times to allow assessing the variability. Since 
obtained matrices are not necessarily positive definite, we translate their 
diagonal by a small quantity that ensures the positivity of all 
$d_{ij}^2=s_{ii}+s_{jj}-2s_{ij}$ as described in 
Section~\ref{sec:extens-kern-data}.

\begin{figure}
  \centering 
  \includegraphics[width=\linewidth]{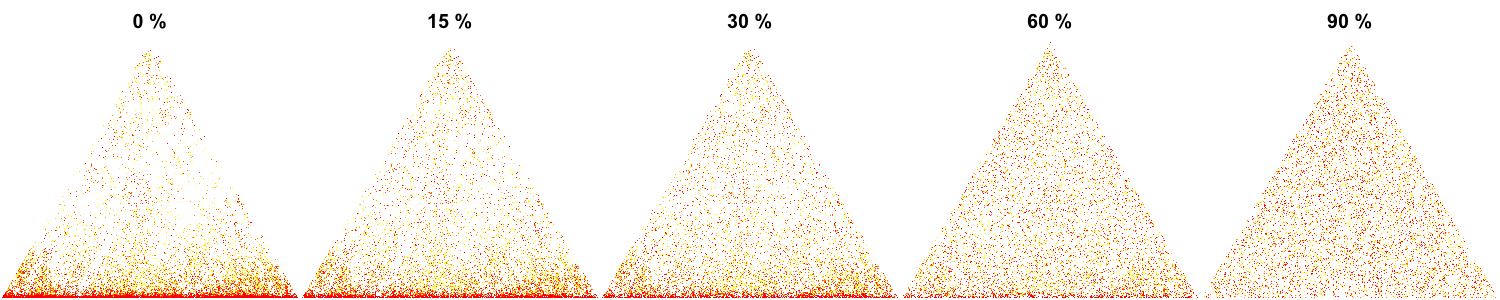}
  \caption{{\bf Illustration of the perturbation process}. From left to right : 
example of Hi-C maps corresponding to increasing perturbation levels.}
\label{fig::perturbation}
\end{figure}

All simulations were performed with \RR{}. The results for standard HAC were 
computed with the function \texttt{hclust} (from the \pkg{stats} package) and 
those for OCHAC were computed with the function \texttt{adjClust} (from the 
\pkg{adjclust} package). Figures were obtained using \pkg{adjClust} or 
\pkg{ggplot2} \citep{wickham_gEGDA2016}.

\subsection{Comparison of standard HAC and OCHAC results}

In this section, the results of standard HAC and OCHAC are compared through the 
corresponding height sequences the dendrograms, and through clusterings obtained 
by horizontal cuts of the dendrograms. While the first two are a direct output 
of the HAC process, clusterings are obtained using a model selection strategy. 
We have considered two such strategies: the broken stick  
\citep{bennett_NP1996}, as implemented in \pkg{adjclust}, and the slope 
heuristic \citep{arlot_etal_capushe2016}, as implemented in \pkg{capushe}. As 
both of them gave similar results, we chose to report here only the results 
obtained for the broken stick heuristic.

\paragraph{Height sequences.} Figure~\ref{fig::heights_adjhcl} shows the
evolution of $m_t$ (normalized by its maximal value among both methods at a
given permutation level) and ESS$_t$ (normalized by the total inertia of the set
of bins) along the two clustering processes for increasing perturbation
levels. For the original dataset, which presents an organization strongly
consistent with the order constraint, the heights of standard HAC and OCHAC are
very similar. However, interestingly, OCHAC improves the objective criteria
(ESS$_t$ and $m_t$) for low perturbation levels (15\%-30\%) across a wide range of merging levels. 

More specifically, we compared the heights obtained for HAC and OCHAC at the
merger number selected by the broken stick heuristic (\citet{bennett_NP1996};
vertical lines in Figure~\ref{fig::heights_adjhcl}). At these numbers of
clusters or in their close neighbourhood, ESS$_t$ is always smaller for OCHAC,
which we interpret as more homogeneous clusterings for OCHAC than for HAC. The
magnitude of the improvement achieved by OCHAC with respect to HAC depends on
the perturbation level: for the original data, it is close to 5\%, whereas it is
much larger (25-30\%) when the perturbation level is 15\%-30\%. It then
decreases again ($<20$\%) for larger perturbation levels (60\%).

The fact that OCHAC can achieve lower values than HAC for ESS$_t$ and $m_t$  may
be counter-intuitive, since --as explained at the beginning of Section
\ref{sec:simulations}-- possible mergers in OCHAC are chosen among only a subset
of the possible mergers in standard HAC. In fact, HAC itself is a heuristic for
the minimization of ESS$_t$, because of its hierarchical agglomerative nature;
in contrast, the optimal clustering at step $t$ in the sense of ESS$_t$ may not
necessary be obtained by merging two clusters of the optimal clustering at step
$t-1$. This result illustrates the robustness to noise of the constrained
approach, which is very interesting in practice: in Hi-C experiments, for
instance, many biases (genomic, experimental, etc.) are encountered. Thus, OCHAC
has to be preferred in such contexts and will additionally result in a lower
computational cost.
\begin{figure}[ht]
  \centering 
  \includegraphics[width=\linewidth]{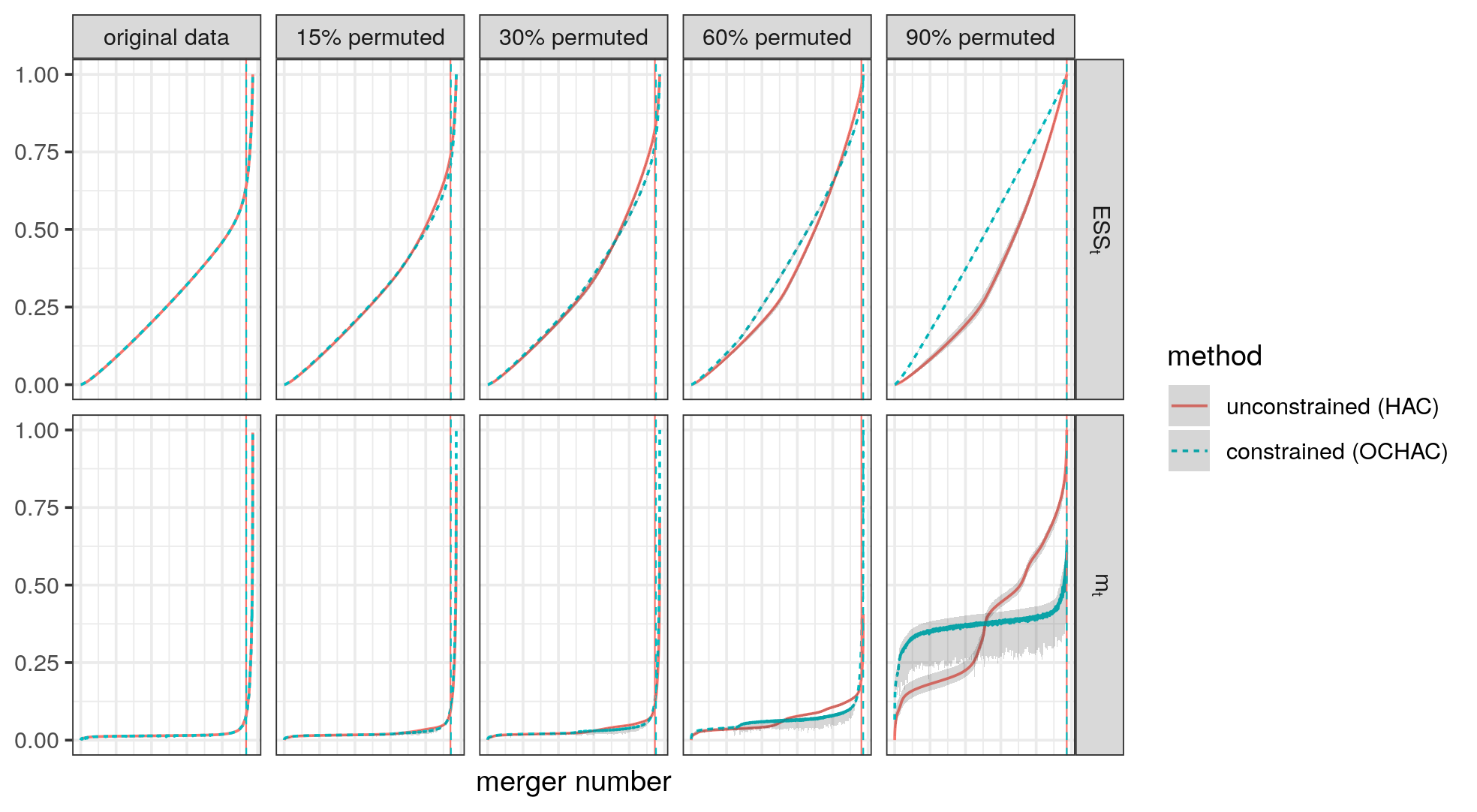}
  \caption{{\bf Comparison of the height sequences for standard HAC (red, solid) 
and OCHAC (blue, dashed)} for ESS$_t$ (top) and $m_t$ (bottom) with various 
levels of perturbations of the original Hi-C matrix. The curves correspond to 
the average criteria over 50 simulations and the grey shadows correspond to the 
minimum and maximum of the criteria over 50 simulations. The vertical lines 
correspond to the average number of clusters chosen by the broken stick 
heuristic, respectively for standard HAC and OCHAC (red, solid and blue, 
dashed).}
  \label{fig::heights_adjhcl}
\end{figure}

For perturbation levels larger than 60\%, the data structure is no more
compatible with the constraint (see Figure~\ref{fig::perturbation}) and standard
HAC seems to performs globally better than OCHAC, as expected. In addition, in
this extreme situation, OCHAC exibits very large reversals in $m_t$ values (seen
with the grey shadow in Figure~\ref{fig::heights_adjhcl}), that are due to
sudden breaks in the quality of the clusterings, induced by the constraint. The
presence of such large reversals is a practical and visible indication that the
constraint is not relevant for the data and that OCHAC should not be used.

\paragraph{Dendrograms and clusterings.}
The same type of conclusion can be drawn when comparing not just the heights 
of the dendrograms but the dendrograms themselves or the clusterings induced by 
these dendrograms. Figure~\ref{fig::distcoph_baker_adjhcl} shows the 
distribution of Baker's $\gamma$ coefficients \citep{baker_JASA1974} between 
the dendrograms of standard HAC and OCHAC versus the perturbation level. This coefficient corresponds to the Spearman 
correlation between the pairwise values of the cophenetic distances between 
pairs of objects as induced by the dendrogram. 
\begin{figure}[ht]
  \centering 
  \includegraphics[width=\linewidth]{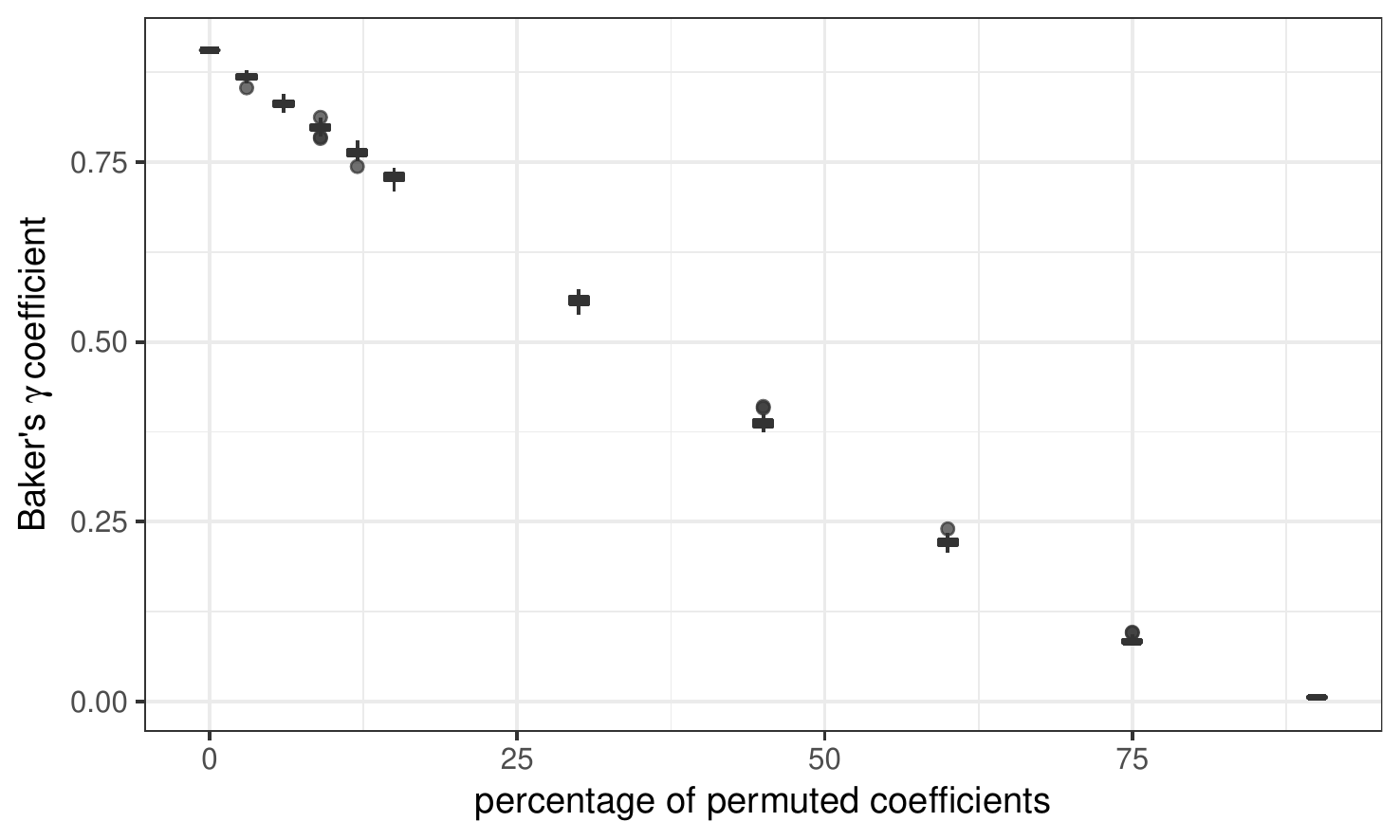}
  \caption{{\bf Baker's $\gamma$ coefficient between OCHAC and standard 
HAC dendrograms versus perturbation level.}}
  \label{fig::distcoph_baker_adjhcl}
\end{figure}
As the perturbation level increases, Baker's $\gamma$ linearly decreases from a value close to 1 (implying very similar dendrograms) to a value close to 0 (implying completely different dendrograms). 

Finally, we compared the clusterings obtained by the broken stick heuristic \citep{bennett_NP1996} as follows. For larger perturbation levels (more
than 60\%) of permuted coefficients, we obtained a trivial clustering with only
one cluster, a strong indication that the cluster structure had disappeared at
these levels. For lower perturbation levels, the obtained clusterings were
compared using the Normalized Mutual Information (NMI,
\cite{danon_etal_JSM2005}). As for Baker's $\gamma$, the NMI values obtained for
the original data and low levels of perturbations (up to 30\%) are very close to
1, which shows a strong similarity of the induced clusterings. As the
perturbation level increases, the obtained partitions became more and more
different, with NMI values below 0.6 (results not shown).

\subsection{Reversals for the different heights}

In this section, we investigate the reversals obtained for different heights and
for standard HAC and OCHAC. Figure~\ref{fig::reversals} gives the evolution of
the percentage of reversals (relative to the total number of simulations, 50),
for standard HAC and OCHAC and for the different types of heights, along the
hierarchical clustering process.
\begin{figure}[ht]
  \centering 
  \includegraphics[width=\linewidth]{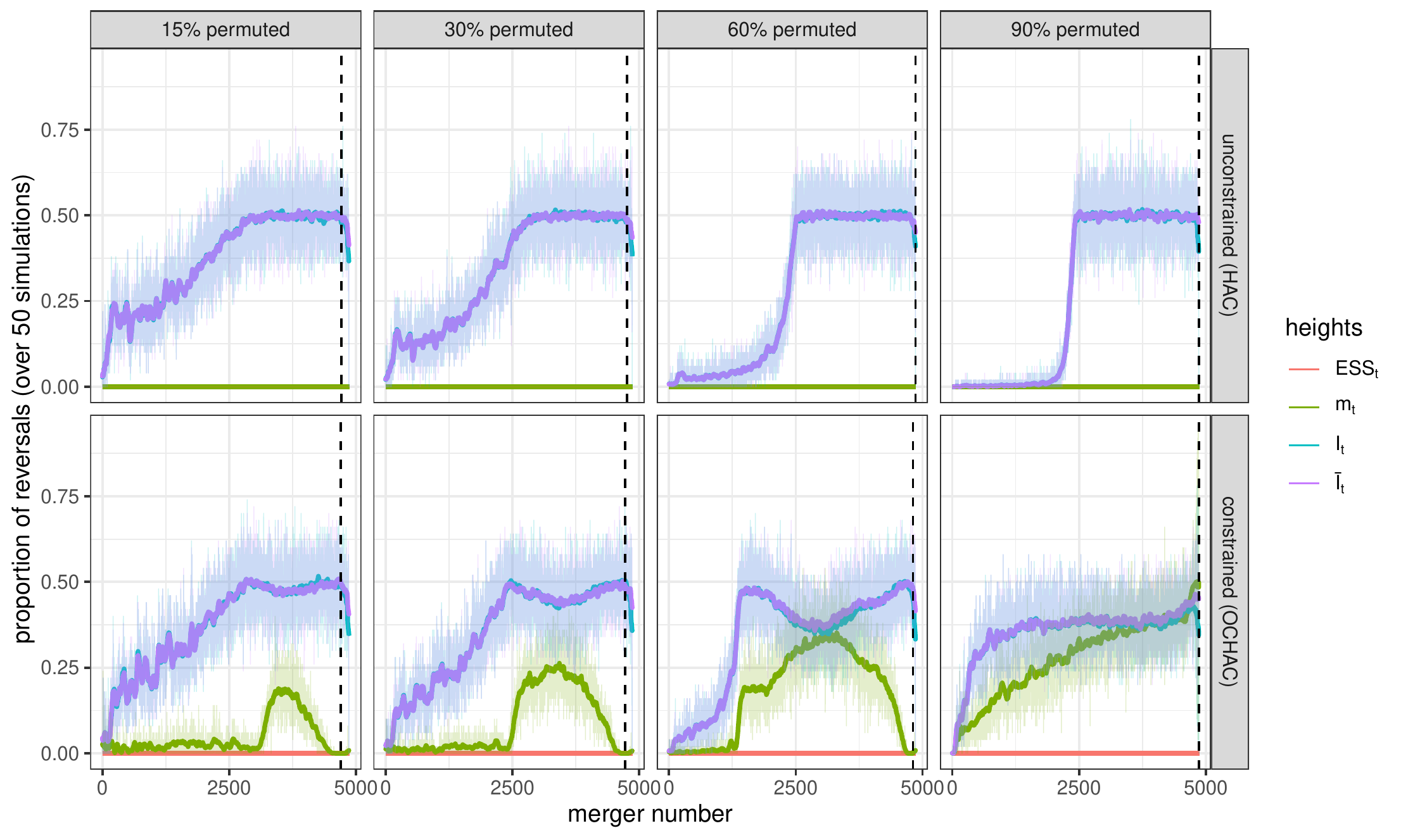}
  \caption{{\bf Evolution of the number of reversals} for ESS$_t$, $m_t$, $I_t$ 
and $\bar{I}_t$ for standard HAC (top) and OCHAC (bottom) for increasing levels
    of perturbation of the original Hi-C matrix. The background shadow is the 
actual value and the strong line is a smoothed value (box kernel, bandwith 
equal to 50).   
    The dotted vertical line
    corresponds to the average number of clusters chosen by the broken stick
    heuristic.}
  \label{fig::reversals}
\end{figure}

As expected from
Section~\ref{sec:HAC_extensions} (Table~\ref{table::summary}),
$(\mbox{ESS}_t)_t$ does not have reversals and $(m_t)_t$ only has reversals for
OCHAC. When the perturbation level increases, the evolution of the number of
reversals in $(I_t)_t$ and $(\bar{I}_t)_t$ is markedly different from that of
$(m_t)_t$. For the smallest perturbation levels (up to 30\%), the number of
reversals of $(m_t)_t$ is close to 0, while it ranges from 10 to 50\% for
$(I_t)_t$ and $(\bar{I}_t)_t$. At these perturbation levels, $(m_t)_t$ almost
never has a reversal at a merger number that
corresponds to the number of clusters chosen by the broken stick heuristic: most
reversals are concentrated at a merger number smaller than the merger chosen by
the broken stick heuristic.  Actually, for small perturbation levels, these
reversals in $m_t$ values help improve the quality of further clusterings by
choosing a solution that is less efficient than that of standard HAC but more
consistent with the data (as already discussed in the example of
Figure~\ref{fig::outperforms}). Hence, when the data structure
is consistent with the constraint, $(m_t)_t$ typically provides an interpretable
dendrogram. This nice property is, of course, lost when the constraint is no
more consistent with the data structure (above a perturbation level of 60\%),
which is explained by the fact that the OCHAC has a poor performance in that
context, as already discussed in the previous section.

On the contrary, $(I_t)_t$ and $(\bar{I}_t)_t$ exhibit larger numbers of
reversals. This is particularly the case for the last mergers, even for small
levels of perturbation and even in the unconstrained case: 40-60\% of the
simulations have reversals for both OCHAC and standard HAC at a number of
clusters corresponding to the selected clustering. We also observe that the
percentage of simulations showing a reversal for standard HAC tends to decrease
when the perturbation level in the data increases for the first steps of the
hierarchical process (the same can be observed, to a much lesser extent, for
OCHAC). This phenomenon is explained below.
\begin{figure}[ht]
  \centering 
  \includegraphics[width=\linewidth]{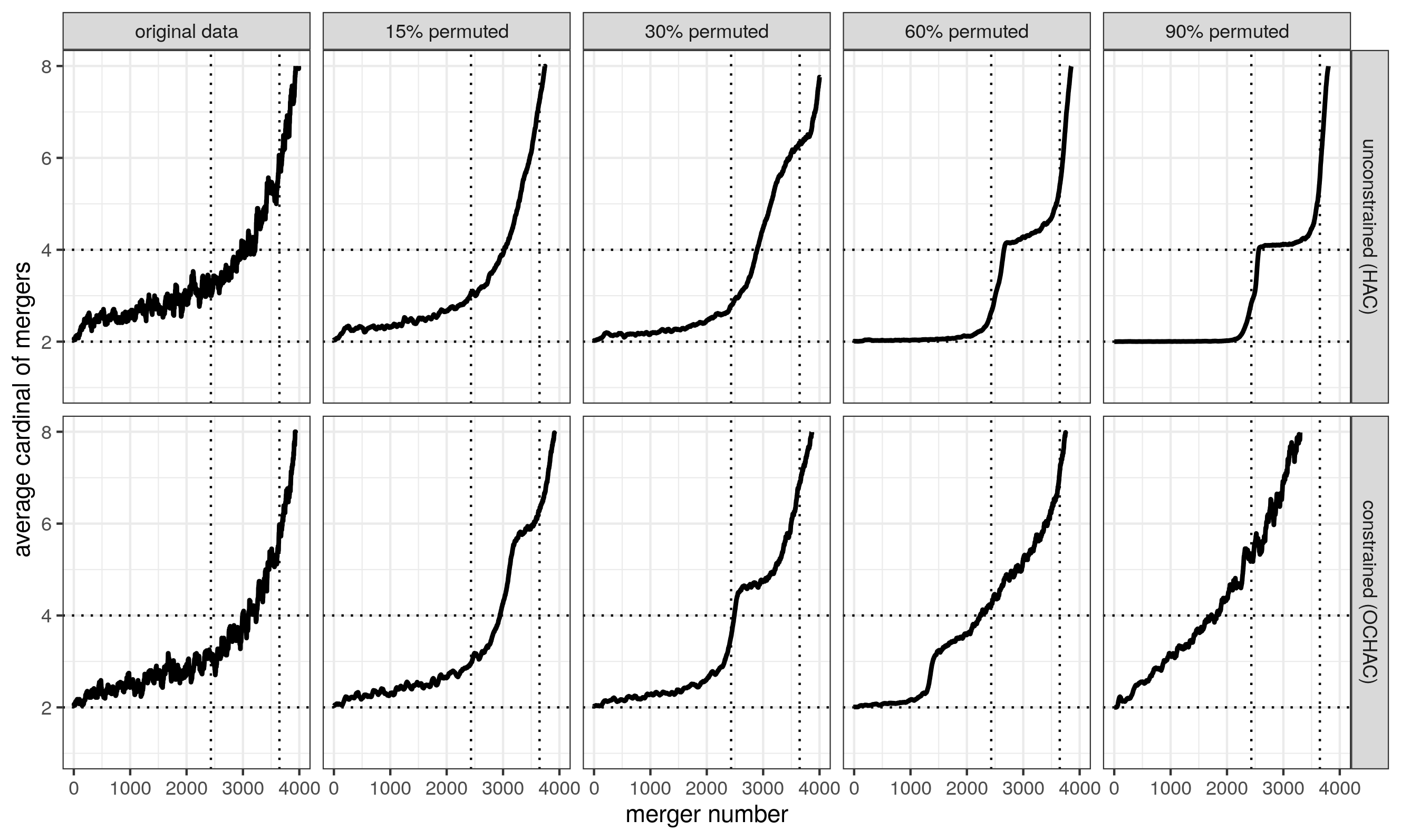}
  \caption{{\bf Evolution of the average cardinal of mergers along the 
hierarchical clustering process} for standard HAC (top) and OCHAC (bottom), and 
different levels of perturbation. Note that the average is computed over the 50 
simulations, whereas the original data correspond to a unique value. Data are 
shown only for the first 4,000 mergers and for a cardinal smaller than 8 for 
the sake of readability. The two dotted vertical lines correspond, respectively 
to $n/2$ and $3n/4$.}
  \label{fig::cardinal_distrib}
\end{figure}

Figure~\ref{fig::cardinal_distrib} displays the
evolution of the merged cluster size thorough the hierarchical clustering and
provides an explanation for this fact. For standard HAC, the number of clusters
with a size equal to 2 during the first steps of the algorithm is strongly
increasing when the perturbation level increases. For a permutation level of
90\%, most of the mergers have a size equal to 2 during half of the clustering
process (for fusion numbers ranging from 1 to at least 2,000). However, for
clusters with a size equal to 2, $I_t$ is equal to $m_t$ which explains the
similiarities between $m_t$ and $I_t$ curves during the first steps of the
clustering process, as the perturbation level increases. Since $(m_t)_t$ is
increasing for standard HAC, this explains why $I_t$ has less reversals in
standard HAC for the first merger numbers when the perturbation level is
higher. The same holds for $\bar{I}_t$ up to a fixed size factor of 2.

\section{Conclusion}

In this article, we studied the applicability of HAC and its constrained version to a wide range of input data. In particular, we have shown that these applications are  justified beyond the Euclidean framework. We have also shown that the monotonicity of the sequence of heights is not always ensured, although this property is necessary for the sequence of clusterings obtained by cutting dendrograms to be consistent with the sequence of clusterings of the algorithm. We have clarified which heights have this property depending on the input data types and for the constrained and unconstrained HAC. We have also pinpointed an important distinction between this monotonicity and the existence of crossovers.

These results imply that the variance of the merged cluster, $I_t$, or the
average variance of the merged cluster, $\bar{I}_t$, are never ensured to be
monotonic, and should thus not be chosen to represent the dendrogram
heights. Strikingly, we have also shown that the constrained version of the HAC
can provide more relevant and efficient solutions than its unconstrained
versions, not only in terms of algorithmic complexity, but also in terms of the
values of the objective function ESS$_t$. In such cases, a small number of
reversals can actually be beneficial to explore intermediate solutions closer to
the data and that lead to more relevant clusters.

\section*{Acknowledgements}

The authors would like to thank Marie Chavent for numerous instructive discussions on this paper.

The authors are grateful to the GenoToul bioinformatics platform (INRA Toulouse, http://bioinfo.genotoul.fr/) and its staff for providing computing facilities. 

\section*{Funding}
The PhD thesis of N.R. is funded by the INRA/Inria doctoral program 2018.
This work was also supported by the SCALES project funded by CNRS (Mission ``Osez l’interdisciplinarité'').

\bibliographystyle{apalike}

\begin{thebibliography}{}

\end{thebibliography}


\begin{thebibliography}{}

\bibitem[Ah-Pine and Wang, 2016]{ahpine_wang_IDA2016}
Ah-Pine, J. and Wang, X. (2016).
\newblock Similarity based hierarchical clustering with an application to text
  collections.
\newblock In Bostr\"om, H., Knobbe, A., Soares, C., and Papapetrou, P.,
  editors, {\em Proceedings of the 15th International Symposium on Intelligent
  Data Analysis (IDA 2016)}, Lecture Notes in Computer Sciences, pages
  320--331, Stockholm, Sweden.

\bibitem[Ambroise et~al., 2019]{ambroise_etal_p2018}
Ambroise, C., Dehman, A., Neuvial, P., Rigaill, G., and Vialaneix, N. (2019).
\newblock Adjacency-constrained hierarchical clustering of a band similarity
  matrix with application to genomics.
\newblock arXiv preprint arXiv:1902.01596.

\bibitem[Arlot et~al., 2016]{arlot_etal_capushe2016}
Arlot, S., Brault, V., Baudry, J.-P., Maugis, C., and Michel, B. (2016).
\newblock {\em capushe: CAlibrating Penalities Using Slope HEuristics}.
\newblock R package version 1.1.1.

\bibitem[Aronszajn, 1950]{Aronszajn1950}
Aronszajn, N. (1950).
\newblock Theory of reproducing kernels.
\newblock {\em Transactions of the American Mathematical Society},
  68(3):337--337.

\bibitem[Baker, 1974]{baker_JASA1974}
Baker, F.~B. (1974).
\newblock Stability of two hierarchical grouping techniques case {I}:
  sensitivity to data errors.
\newblock {\em Journal of the American Statistical Association},
  69(346):440--445.

\bibitem[Batagelj, 1981]{Batagelj1981}
Batagelj, V. (1981).
\newblock Note on ultrametric hierarchical clustering algorithms.
\newblock {\em Psychometrika}, 46(3):351--352.

\bibitem[Bennett, 1996]{bennett_NP1996}
Bennett, K.~D. (1996).
\newblock Determination of the number of zones in a biostratigraphical
  sequence.
\newblock {\em New Phytologist}, 132(1):155--170.

\bibitem[Chavent et~al., 2018]{chavent_etal_CS2018}
Chavent, M., Kuentz-Simonet, V., Labenne, A., and Saracco, J. (2018).
\newblock {ClustGeo2}: an {R} package for hierarchical clustering with spatial
  constraints.
\newblock {\em Computational Statistics}, 33(4):1799--1822.

\bibitem[Chen and Ye, 2008]{chen_ye_ICML2008}
Chen, J. and Ye, J. (2008).
\newblock Training {SVM} with indefinite kernels.
\newblock In Cohen, W., McCallum, A., and Roweis, S., editors, {\em Proceedings
  of the 25th International Conference on Machine Learning (ICML 2008)}, pages
  136--146, Helsinki, Finland. ACM, New York, NY, USA.

\bibitem[Chen et~al., 2009]{chen_etal_JMLR2009}
Chen, Y., Garcia, E., Gupta, M., Rahimi, A., and Cazzanti, L. (2009).
\newblock Similarity-based classification: concepts and algorithm.
\newblock {\em Journal of Machine Learning Research}, 10:747--776.

\bibitem[Danon et~al., 2005]{danon_etal_JSM2005}
Danon, L., Diaz-Guilera, A., Duch, J., and Arenas, A. (2005).
\newblock Comparing community structure identification.
\newblock {\em Journal of Statistical Mechanics: Theory and Experiment},
  2005:P09008.

\bibitem[Dehman, 2015]{Dehman2015}
Dehman, A. (2015).
\newblock {\em Spatial Clustering of Linkage Disequilibrium Blocks for
  Genome-Wide Association Studies}.
\newblock PhD thesis, Universit\'e Paris Saclay.

\bibitem[Dixon et~al., 2012]{dixon_etal_N2012}
Dixon, J., Selvaraj, S., Yue, F., Kim, A., Li, Y., Shen, Y., Hu, M., Liu, J.,
  and Ren, B. (2012).
\newblock Topological domains in mammalian genomes identified by analysis of
  chromatin interactions.
\newblock {\em Nature}, 485:376--380.

\bibitem[Ferligoj and Batagelj, 1982]{Ferligoj1982}
Ferligoj, A. and Batagelj, V. (1982).
\newblock Clustering with relational constraint.
\newblock {\em Psychometrika}, 47(4):413--426.

\bibitem[Fraser et~al., 2015]{fraser_etal_MSB2015}
Fraser, J., Ferrai, C., Chiariello, A.~M., Schueler, M., Rito, T., Laudanno,
  G., Barbieri, M., Moore, B.~L., Kraemer, D.~C., Aitken, S., Xie, S.~Q.,
  Morris, K.~J., Itoh, M., Kawaji, H., Jaeger, I., Hayashizaki, Y., Carninci,
  P., Forrest, A.~R., {The FANTOM Consortium}, Semple, C.~A., Dostie, J.,
  Pombo, A., and Nicodemi, M. (2015).
\newblock Hierarchical folding and reorganization of chromosomes are linked to
  transcriptional changes in cellular differentiation.
\newblock {\em Molecular Systems Biology}, 11:852.

\bibitem[Grimm, 1987]{Grimm1987}
Grimm, E.~C. (1987).
\newblock {CONISS}: a {FORTRAN} 77 program for stratigraphically constrained
  analysis by the method of incremental sum of squares.
\newblock {\em Computers \& Geosciences}, 13(1):13--35.

\bibitem[Haddad et~al., 2017]{haddad_etal_NAR2017}
Haddad, N., Vaillant, C., and Jost, D. (2017).
\newblock {IC-Finder}: inferring robustly the hierarchical organization of
  chromatin folding.
\newblock {\em Nucleic Acids Research}, 45(10):e81--e81.

\bibitem[Hartigan, 1967]{hartigan_JASA1967}
Hartigan, J.~A. (1967).
\newblock Representation of similarity matrices by trees.
\newblock {\em Journal of the American Statistical Association},
  62(320):1140--1158.

\bibitem[Imakaev et~al., 2012]{imakaev_etal_NM2012}
Imakaev, M., Fudenberg, G., McCord, R., Naumova, N., Goloborodko, A., Lajoie,
  B., Dekker, J., and Mirny, L. (2012).
\newblock Iterative correction of {Hi-C} data reveals hallmarks of chromosome
  organization.
\newblock {\em Nature Methods}, 9(10):999--1003.

\bibitem[Johnson, 1967]{johnson_P1967}
Johnson, S.~C. (1967).
\newblock Hierarchical clustering schemes.
\newblock {\em Psychometrika}, 32(3):241--254.

\bibitem[Juggins, 2012]{juggins_2012}
Juggins, S. (2012).
\newblock {\em rioja: Analysis of Quaternary Science Data}.
\newblock CRAN, Newcastle, UK.

\bibitem[Krislock and Wolkowicz, 2012]{Krislock2012}
Krislock, N. and Wolkowicz, H. (2012).
\newblock {\em Handbook on Semidefinite, Conic and Polynomial Optimization},
  volume 166 of {\em International Series in Operations Research \& Management
  Science}, chapter Euclidean distance matrices and applications, pages
  879--914.
\newblock Springer, New York, Dordrecht, Heidelberg, London.

\bibitem[Kruskal, 1964]{Kruskal1964}
Kruskal, Joseph, B. (1964).
\newblock Multidimensional scaling by optimizing goodness of fit to a nonmetric
  hypothesis.
\newblock {\em Psychometrika}, 29(1):1--27.

\bibitem[Lance and Williams, 1967]{Lance1967}
Lance, G. and Williams, W. (1967).
\newblock A general theory of classificatory sorting strategies: 1.
  {H}ierarchical systems.
\newblock {\em The Computer Journal}, 9(4):373--380.

\bibitem[Lebart, 1978]{lebart_CAD1978}
Lebart, L. (1978).
\newblock Programme d'agr{\'e}gation avec contraintes.
\newblock {\em Les Cahiers de l'Analyse des Donn{\'e}es}, 3(3):275--287.

\bibitem[Miyamoto et~al., 2015]{Miyamoto2015}
Miyamoto, S., Abe, R., Endo, Y., and Takeshita, J.-I. (2015).
\newblock Ward method of hierarchical clustering for non-{E}uclidean similarity
  measures.
\newblock In {\em Proceedings of the VIIth International Conference of Soft
  Computing and Pattern Recognition (SoCPaR 2015)}, Fukuoka, Japan. IEEE.

\bibitem[Murtagh and Legendre, 2014]{Murtagh2014}
Murtagh, F. and Legendre, P. (2014).
\newblock Ward's hierarchical agglomerative clustering method: which algorithms
  implement {W}ard's criterion.
\newblock {\em Journal of Classification}, 31(3):274--295.

\bibitem[Qin et~al., 2003]{qin_etal_B2003}
Qin, J., Lewis, D.~P., and Noble, W.~S. (2003).
\newblock Kernel hierarchical gene clustering from microarray expression data.
\newblock {\em Bioinformatics}, 19(16):2097--2104.

\bibitem[Schleif and Tino, 2015]{schleif_tino_NC2015}
Schleif, F.-M. and Tino, P. (2015).
\newblock Indefinite proximity learning: a review.
\newblock {\em Neural Computation}, 27(10):2039--2096.

\bibitem[Schoenberg, 1935]{schoenberg_AM1935}
Schoenberg, I. (1935).
\newblock Remarks to {M}aurice {F}r\'echet's article ``{Sur} la d\'efinition
  axiomatique d'une classe d'espace distanci\'es vectoriellement applicable sur
  l'espace de {H}ilbert''.
\newblock {\em Annals of Mathematics}, 36:724--732.

\bibitem[Sch{\"{o}}lkopf and Smola, 2002]{Scholkopf2002}
Sch{\"{o}}lkopf, B. and Smola, A.~J. (2002).
\newblock {\em Learning with Kernels: Support Vector Machines, Regularization,
  Optimization, and Beyond}.
\newblock MIT Press.

\bibitem[Strauss and von Maltitz, 2017]{strauss_vonmaltitz_PO2017}
Strauss, T. and von Maltitz, M.~J. (2017).
\newblock Generalising {W}ard's method for use with {M}anhattan distances.
\newblock {\em PLoS ONE}, 12:e0168288.

\bibitem[Sz\'ekely and Rizzo, 2005]{szekely_rizzo_JC2005}
Sz\'ekely, G.~J. and Rizzo, M.~L. (2005).
\newblock Hierarchical clustering via joint between-within distances: extending
  {W}ard's minimum variance method.
\newblock {\em Journal of Classification}, 22(2):151--183.

\bibitem[Ward, 1963]{ward_JASA1963}
Ward, J.~H. (1963).
\newblock Hierarchical grouping to optimize an objective function.
\newblock {\em Journal of the American Statistical Association},
  58(301):236--244.

\bibitem[Wickham, 2016]{wickham_gEGDA2016}
Wickham, H. (2016).
\newblock {\em ggplot2: Elegant Graphics for Data Analysis}.
\newblock Springer-Verlag, New York, USA.

\bibitem[Wishart, 1969]{Wishart1969}
Wishart, D. (1969).
\newblock An algorithm for hierarchical classifications.
\newblock {\em Biometrics}, 25(1):165--170.

\bibitem[Young and Householder, 1938]{Young1938}
Young, G. and Householder, A. (1938).
\newblock Discussion of a set of points in terms of their mutual distances.
\newblock {\em Psychometrika}, 3:19--22.

\bibitem[Zufferey et~al., 2018]{zufferey2018comparison}
Zufferey, M., Tavernari, D., Oricchio, E., and Ciriello, G. (2018).
\newblock Comparison of computational methods for the identification of
  topologically associating domains.
\newblock {\em Genome biology}, 19(1):217.

\end{thebibliography}

\appendix

\section*{Appendix}
\addcontentsline{toc}{section}{Appendix}
\section{Proof of Proposition~\ref{prop::decreasing}}
\label{sec:proof-decreasing}
\begin{proof}[Proof of Proposition~\ref{prop::decreasing}]
 We begin by noting that by Proposition~\ref{lem:linkage-crossovers}, the only 
reversals that may occur are crossovers. With the notation of 
Proposition~\ref{prop::decreasing}, a crossover at step $t+1$ corresponds to the 
situation where
 \begin{eqnarray*}
\delta(G_l , G_r)\geq \delta(G_l \cup G_r, G_{\bar{r}}) \textrm{ or } \delta(G_l , G_r)\geq  \delta(G_l \cup G_r, G_{\bar{l}}).
 \end{eqnarray*}
By symmetry we focus on the first case. With the notation of Proposition~\ref{prop::decreasing}, and using the Lance-Willams formula~\eqref{eq:lance-williams_ward}, the first condition is equivalent to
\begin{align*}
  \delta(G_l, G_r) \geq \frac{g_{lr'} \delta(G_l, G_{\bar{r}}) + g_{rr'} \delta (G_r, G_{\bar{r}})}{g_{lr'} + g_{rr'}}
\end{align*}
while the second one is equivalent to 
\begin{align*}
  \delta(G_l, G_r) \geq \frac{g_{\bar{l}l} \delta(G_{\bar{l}}, G_{l}) + g_{\bar{l}r} \delta (G_{\bar{l}}, G_{r})}{g_{\bar{l}l} + g_{\bar{l}r}}
\end{align*}
hence the result. \end{proof}

\section{Counter-example of the monotonicity of $\bar{I}_t$ for standard HAC 
in 
the Euclidean case}
\label{sec::counter_barI}

\begin{figure}[H]
  \centering 
  \begin{tabular}{m{0.5\linewidth} p{0.5\linewidth}}
    \includegraphics[width=\linewidth]{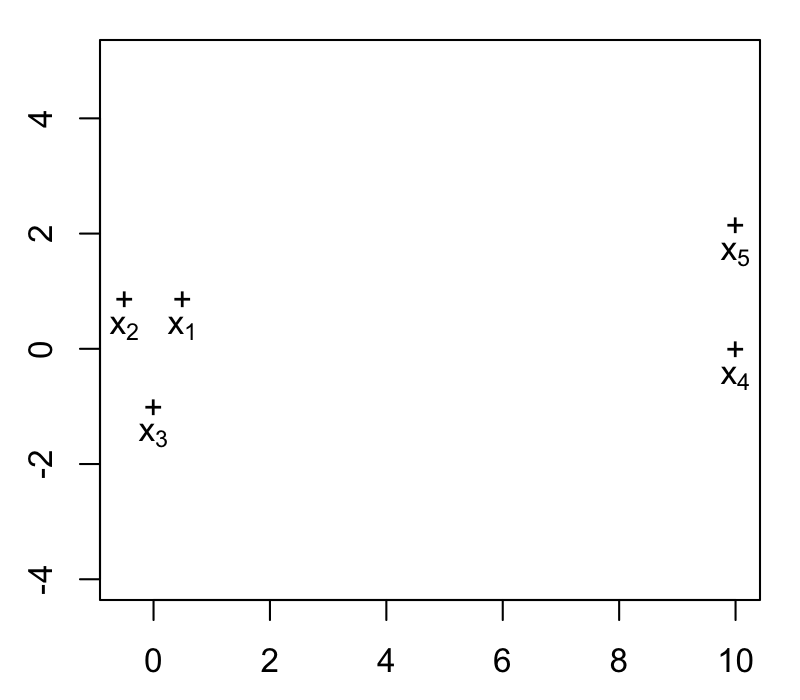} &
        \begin{minipage}{\linewidth}
        \footnotesize{
    $\left(\begin{array}{cc}
    x_1\\
    x_2\\
    x_3\\
    x_4\\
    x_5
  \end{array} \right) =\left(\begin{array}{cc}
    \frac{1}{2} & \frac{\sqrt{3}}{2} \\
    -\frac{1}{2} & \frac{\sqrt{3}}{2} \\
    0 & -1\\
    10 & 0\\
    10 & 2.15
  \end{array} \right)$\\
  $D \approx
\left(\begin{array}{ccccc}
 0.00 & 1.00 & 1.93 & 9.54 & 9.59 \\ 
1.00 & 0.00 & 1.93 & 10.54 & 10.58 \\ 
1.93 & 1.93 & 0.00 & 10.05 & 10.48 \\ 
9.54 & 10.54 & 10.05 & 0.00 & 2.15 \\ 
9.59 & 10.58 & 10.48 & 2.15 & 0.00 \\ 
\end{array}\right)$}
  \end{minipage}
  \end{tabular}\\
  \includegraphics[width=\linewidth]{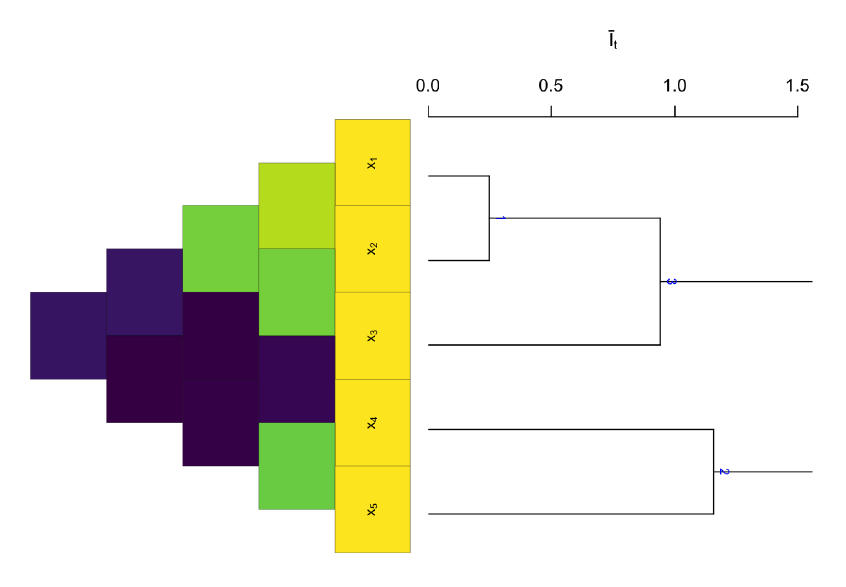}
  \caption{{\bf A reversal for Euclidean standard HAC with height defined as
      $\bar{I}_t$.} Top left: Configuration of the objects in $\mathbb{R}^2$. Top right: Coordinates of the objects and Euclidean distance matrix corresponding to
      this configuration. Bottom
    left: Representation of the values of the dissimilarity (dark colors
    correspond to larger values, so distant objects).  Bottom right: dendrogram
    obtained from standard HAC. Only the first 3 merges of the dendrogram is
    represented to ensure a comprehensive view of the sequence of heights.}
  \label{fig::AI}
\end{figure}

\section{Step-by-step description of the counter-examples}
\label{sec:counterexamples_standardHAC}

In the following tables, red color is used to signal reversals. Green color in details of Figure~\ref{fig::outperforms} is used to highlight the value of the objective function (ESS$_t$) for the clustering with 3 clusters.

\begin{table}[H]
\centering
\begin{tabular}{lllllll}
  \hline
  Merger & cluster 1 & cluster 2 & $m_t$ & ESS$_t$ & $I_t$ & 
$\overline{I}_t$ \\ 
  \hline
1 & $\{x_1\}$ & $\{x_2\}$ & 1.000 & 1.000 & 1.000 & 0.500 \\ 
  2 & $\{x_1,x_2\}$ & $\{x_3\}$ & \textcolor{red}{0.517} & 1.517 & 1.517 & 0.506 \\ 
   \hline
\end{tabular}
\caption{Details of Figure~\ref{fig::euclidean-reversal}}
\end{table}

\begin{table}[H]
\centering
\begin{tabular}{lllllll}
  &&OCHAC\\
  \hline
  Merger & cluster 1 & cluster 2 & $m_t$ & ESS$_t$ & $I_t$ & 
$\overline{I}_t$ \\ 
  \hline
1 & $\{x_1\}$ & $\{x_2\}$ & 2.500 & 2.500 & 2.500 & 1.250 \\ 
  2 & $\{x_1,x_2\}$ & $\{x_3\}$ & \textcolor{red}{2.167} & 4.667 & 4.667 & 1.556 \\ 
  3 & $\{x_6\}$ & $\{x_7\}$ & 2.500 & 7.167 & 2.500 & 1.250 \\ 
  4 & $\{x_5\}$ & $\{x_6,x_7\}$ & \textcolor{red}{2.167} & \textcolor{green}{9.333} & 4.667 & 1.556 \\ 
  5 & $\{x_1, x_2, x_3\}$ & $\{x_4\}$ & 13.333 & 22.667 & 18.000 & 4.500 \\ 
  6 & $\{x_1, x_2, x_3, x_4\}$ & $\{x_5, x_6, x_7\}$ & 20.762 & 43.429 & 43.429 & 6.204 \\ 
     \hline
          \\
     &&HAC\\
      \hline
  Merger & cluster 1 & cluster 2 & $m_t$ & ESS$_t$ & $I_t$ & 
$\overline{I}_t$ \\ 
     \hline
1 & $\{x_2\}$ & $\{x_6\}$ & 0.500 & 0.500 & 0.500 & 0.250 \\ 
  2 & $\{x_1\}$ & $\{x_3\}$ & 2.000 & 2.500 & 2.000 & 1.000 \\ 
  3 & $\{x_5\}$ & $\{x_7\}$ & 2.000 & 4.500 & 2.000 & 1.000 \\ 
  4 & $\{x_2,x_6\}$ & $\{x_1,x_3\}$ & 6.250 & \textcolor{green}{10.750} & 8.750 & 2.188 \\ 
  5 & $\{x_4\}$ & $\{x_1,x_2,x_3,x_6\}$ & 13.250 & 24.000 & 22.000 & 4.400 \\ 
  6 & $\{x_5,x_7\}$ & $\{x_1,x_2,x_3,x_4,x_6\}$ & 19.429 & 43.429 & 43.429 & 6.204 \\ 
   \hline
\end{tabular}
\caption{Details of Figure~\ref{fig::outperforms}}
\end{table}

\begin{table}[H]
\centering
\begin{tabular}{lllllll}
  \hline
Merger & cluster 1 & cluster 2  & $m_t$ & ESS$_t$ & $I_t$ & 
$\overline{I}_t$ \\ 
  \hline
1 & $\{x_1\}$ & $\{x_2\}$ & 0.50 & 0.50 & 0.50 & 0.25 \\ 
  2 & $\{x_1,x_2\}$ & $\{x_3\}$ & 2.32 & 2.82 & 2.82 & 0.94\\ 
  3 & $\{x_4\}$ & $\{x_5\}$ & 2.33 & 5.15 & \textcolor{red}{2.33} & 1.17 \\ 
  4 & $\{x_1,x_2,x_3\}$ & $\{x_4,x_5\}$ & 120.84 & 125.99 & 125.99 & 25.20 \\ 
   \hline
\end{tabular}
\caption{Details of Figure~\ref{fig::I}}
\end{table}

\begin{table}[H]
\centering
\begin{tabular}{lllllll}
  \hline
 Merger & cluster 1 & cluster 2  & $m_t$ & ESS$_t$ & $I_t$ & 
$\overline{I}_t$ \\ 
  \hline
1 & $\{x_1\}$ & $\{x_2\}$ & 0.995 & 0.995 & 0.995 & 0.498 \\ 
  2 & $\{x_1,x_2\}$ & $\{x_3\}$ & 0.998 & 1.993 & 1.993 & 0.664 \\ 
  3 & $\{x_1,x_2,x_3\}$ & $\{x_4\}$ & \textcolor{red}{ 0.997} & 2.990 & 2.990 & 0.748 \\ 
  4 & $\{x_1,x_2,x_3,x_4\}$ & $\{x_5\}$ & \textcolor{red}{-0.192} & \textcolor{red}{2.798} & \textcolor{red}{2.798} & \textcolor{red}{0.560} \\ 
  5 & $\{x_1,x_2,x_3,x_4,x_5\}$ & $\{x_6\}$ & 0.534 & 3.332 & 3.332 & \textcolor{red}{0.555} \\ 
   \hline
\end{tabular}
\caption{Details of Figure~\ref{fig::ESS}}
\end{table}

\begin{table}[H]
\centering
\begin{tabular}{lllllll}
  \hline
 Merger & cluster 1 & cluster 2  & $m_t$ & ESS$_t$ & $I_t$ & 
$\overline{I}_t$ \\ 
  \hline
1 & $\{x_1\}$ & $\{x_2\}$ & 0.50 & 0.50 & 0.50 & 0.25 \\ 
  2 & $\{x_4\}$ & $\{x_5\}$ &  2.31 & 2.81 & 2.31 & 1.16 \\ 
  3 & $\{x_1,x_2\}$ & $\{x_3\}$ &  2.32 & 5.13 & 2.82 & \textcolor{red}{0.94} \\ 
  4 & $\{x_1,x_2,x_3\}$ & $\{x_4,x_5\}$ &  120.83 & 125.96 & 125.96 & 25.19 \\ 
   \hline
\end{tabular}
\caption{Details of Figure~\ref{fig::AI}}
\end{table}

\end{document}